\documentclass[conference]{IEEEtran}
\IEEEoverridecommandlockouts
\usepackage{amsmath,amsfonts,amssymb}
\usepackage{amsthm}
\usepackage{graphicx}
\usepackage{algorithmicx}
\usepackage[ruled,linesnumbered]{algorithm2e}
\usepackage{algpseudocode}
\usepackage{textcomp}
\usepackage{makecell}
\usepackage{multirow,multicol}
\usepackage[colorlinks]{hyperref}
\hypersetup{citecolor=blue}
\usepackage[table]{xcolor}
\usepackage{threeparttable}
\usepackage{booktabs}
\usepackage{float}
\usepackage{balance}
\usepackage{tikz}
\usetikzlibrary{positioning,arrows.meta,calc,fit}
\hypersetup{hidelinks}
\newtheorem{proposition}{Proposition}

\newtheorem{remark}{Remark}
\newtheorem{corollary}{Corollary}

\usepackage[font=small,skip=2pt]{caption}

\usepackage{enumitem}
\setlist{nosep}

\begin{document}
\bstctlcite{IEEEtranBSTcontrol}
\IEEEaftertitletext{\vspace{-6mm}}
\title{CSI Reconstruction in Fluid Antenna Systems \\Without Spatial Covariance Priors\vspace{-6mm}}
\author{Zhentian Zhang, Kaitao Meng, Tuo Wu, Kai-Kit Wong, Hao Xu, Liang Liu, Pei Xiao, Chao Wang, Kin-Fai Tong
	
	\thanks{ }
	\thanks{Zhentian Zhang is with the Department of Electrical and Electronic Engineering, The Hong Kong Polytechnic University, Hong Kong, SAR (e-mail: zhentianzhangzzt@gmail.com).}
	\thanks{Kaitao Meng is with the Department of Electrical and Electronic Engineering, University of Manchester, Manchester, UK (email: kaitao.meng@manchester.ac.uk).}
	\thanks{Tuo Wu is with the School of Electronic and Information Engineering, South China University of Technology, Guangzhou 510640, China (E-mail: wutuo@scut.edu.cn).}
	\thanks{Kai-Kit Wong is with the Department of Electronic and Electrical Engineering, University College London, Torrington Place, WC1E 7JE, United Kingdom  (e-mail: kai-kit.wong@ucl.ac.uk).}
	\thanks{Hao Xu is with the National Mobile Communications Research Laboratory, Frontiers Science Center for Mobile Information Communication and Security, Southeast University, Nanjing, 210096, China (e-mail: hao.xu@seu.edu.cn).}
	\thanks{Liang Liu is with the Department of Electrical and Electronic Engineering, The Hong Kong Polytechnic University, Hong Kong, SAR (liang-eie.liu@polyu.edu.hk).}
	\thanks{Pei Xiao is with the 5G \& 6G Innovation Centre, University of Surrey, U. K. (email: p.xiao@surrey.ac.uk).}
	\thanks{Chao Wang is with the Integrated Service Networks Laboratory, Xidian University, Xi’an 710071, China (e-mail: drchaowang@126.com).}
	\thanks{Kin-Fai Tong is with the School of Science and Technology, Hong Kong Metropolitan University, Hong Kong, China (E-mail: ktong@hkmu.edu.hk)}
}

\maketitle

\begin{abstract}
	Fluid antenna systems (FASs) exploit many candidate ports for spatial diversity, but hardware constraints allow channel observations at only a few active ports. Whether full-port CSI can be recovered without pre-acquired channel statistics remains open. Under the Clarke isotropic scattering model, we show that the channel lies in a low-dimensional spatial modal subspace determined by the scattering environment rather than the total port count. Consequently, recovery becomes feasible when the number of observed ports reaches the modal dimension (i.e., $M\geq r$), even when $M\ll N$. We further establish a sharp feasibility threshold: reliable recovery is impossible below this dimension regardless of SNR, whereas accuracy improves with additional observations above it. By decomposing the recovery error into modal truncation, estimation, and learning components, we derive explicit tradeoffs among RF chains, pilot overhead, transmit power, and training data. These results enable scalable prior-free full-port CSI recovery with few active ports.
\end{abstract}

\begin{IEEEkeywords}
	Fluid antenna systems, prior-free CSI reconstruction, modal-domain modeling, channel state information.
\end{IEEEkeywords}

\section{Introduction}\label{sec:intro}

Fluid antenna systems (FASs)~\cite{fas-twc-21,kit_electronic} exploit a long-overlooked physical phenomenon: \emph{channel fading can vary substantially over sub-wavelength spatial offsets}. This fine-grained variation, accurately characterized by the classical Jakes'~\cite{Jakes1} and Clarke's~\cite{Clarke2} models, arises from the spatially correlated fading field observed by densely packed antenna ports. Through rapid port switching, FASs exploit both fading randomness \cite{r0,r1,r2,r3,r4,r5,r6,r7} and reconfigurable geometry~\cite{FAA1,FAA2,XJY,LHY,CKJ}, thereby achieving substantial performance gains with limited hardware cost.

\subsection{Related Work and Motivation}

Although FAS performance analyses often assume channel state information (CSI), CSI must be acquired before these gains can be realized, particularly in multi-user systems. A central challenge is to reconstruct the full channel over $N$ candidate ports from observations at only $M\ll N$ active ports. The fundamental question is whether full-port CSI can be recovered from such sparse observations without \emph{any} statistical prior and, if so, under what conditions and resource requirements.

Existing approaches rely on two types of prior knowledge. The first assumes access to the spatial covariance matrix $\boldsymbol{\Sigma}$. Under this assumption, \cite{Ideal_bound} derives the fundamental MMSE normalized mean-square error (NMSE) bound and shows that accurate reconstruction is possible under arbitrary port-selection strategies. Subsequent works learn $\boldsymbol{\Sigma}$ offline from training data and use it for online Bayesian reconstruction~\cite{CSI3}. However, because covariance depends on the propagation environment, reliable offline learning can be difficult in practice.

The second approach adopts geometry-structured channel models~\cite{Geometry0}, in which a finite number of scatterers parameterize propagation and implicitly determine $\boldsymbol{\Sigma}$. This low-dimensional representation enables compact and robust inference~\cite{Geometry1,Geometry3,Geometry4,Geometry5}. Notably, \cite{Geometry0} achieves near-optimal performance with linear complexity. However, finite-scatterer models may be invalid in rich-scattering environments. 

{\em It therefore remains unclear whether full-port reconstruction is possible without either covariance knowledge or a prescribed propagation geometry structure. This prior-free setting motivates this work.}

\subsection{Contributions}

This work establishes the feasibility limits and resource requirements of prior-free full-port CSI reconstruction for FASs. The main contributions are as follows.

\begin{itemize}
	\item \textbf{Modal Domain Representation:}
	Under the Clarke isotropic scattering model~\cite{Clarke2}, we show that an $N$-port FAS channel is effectively represented by a finite set of dominant spatial modes. The modal dimension is determined by the scattering environment rather than the total port count. This structure makes sparse-port reconstruction feasible once the observed ports resolve the modal subspace (i.e., $M\geq r$), even when $M\ll N$. Moreover, the modal truncation error decreases as the number of candidate ports increases.
	
	\item \textbf{Theoretical Performance Interpretation:}
	We derive a hierarchy of NMSE bounds and establish a sharp threshold governed by the modal dimension. Below this threshold, reliable full-port recovery is information-theoretically impossible regardless of SNR or prior knowledge. Above it, recovery accuracy improves as more ports are observed. We further decompose the total error into modal truncation, estimation, and learning components, capturing the effects of candidate-port count, noisy observations, and finite training data, respectively. 
\end{itemize}

The remainder of this paper is organized as follows. Section~\ref{sec:problem_formulation} introduces the sparse observation model and modal-domain channel representation. Section~\ref{sec:performance_bounds} derives the error decomposition, oracle bounds, and feasibility threshold. Section~\ref{sec:numerical_results} presents the numerical results. Finally, Section~\ref{sec:conclusion} concludes the paper.
\section{Modal Domain Channel Representation}\label{sec:problem_formulation}

\subsection{Channel Reconstruction from Sparse Port Observations}
\label{subsec:problem_formulation}

Consider a fluid antenna with aperture length $W\lambda$, where $W$ is the normalized aperture size and $\lambda$ is the carrier wavelength. Uniformly spaced candidate ports have normalized positions
\begin{equation}
	x_n = \frac{(n-1)\,W}{N-1},
	\label{eq:port_position}
\end{equation}
for $n=1,\dots,N$. At pilot snapshot $t$, the full-port channel is
\begin{equation}
	\mathbf g_t
	= \bigl[g_t(1),\, g_t(2),\, \dots,\, g_t(N)\bigr]^{\mathsf T}
	\in \mathbb C^{N}.
	\label{eq:full_channel_vector}
\end{equation}

Hardware constraints allow only a subset of ports to be active. Let $\mathcal O_t \subseteq \{1,\dots,N\}$ denote the active-port set, where $|\mathcal O_t|=M_t \ll N$, and let $\mathcal U_t = \{1,\dots,N\}\setminus\mathcal O_t$ denote the inactive ports. The selection matrix $\mathbf S_t \in \{0,1\}^{M_t\times N}$ extracts the entries indexed by $\mathcal O_t$. The noisy pilot observation is
\begin{equation}
	\mathbf y_t
	= \mathbf S_t \mathbf g_t + \mathbf v_t,
	\label{eq:direct_sampling_model}
\end{equation}
where $\mathbf v_t \sim \mathcal{CN}(\mathbf 0,\,\sigma^2\mathbf I_{M_t})$ is additive white Gaussian noise with variance $\sigma^2$.

Our objective is to recover $\mathbf g_t$ from $\mathbf y_t$, including channel estimation at $\mathcal O_t$ and spatial interpolation at $\mathcal U_t$. With known spatial correlation, both are jointly solved by the MMSE estimator \cite{Ideal_bound}. In practice, however, the correlation structure is unknown and must be learned jointly with the channels.

We consider $T$ pilot snapshots $\{\mathbf y_t\}_{t=1}^{T}$ sharing a structured prior parameterized by $\boldsymbol\theta$. The single-snapshot setting follows by setting $T=1$. An empirical-Bayes approach first estimates the prior from the marginal likelihood,
\begin{equation}
	\widehat{\boldsymbol\theta}
	= \arg\max_{\boldsymbol\theta}\;
	p(\mathbf Y;\boldsymbol\theta),
	\label{eq:empirical_bayes_theta}
\end{equation}
and then reconstructs each channel through posterior inference,
\begin{equation}
	\widehat{\mathbf g}_t
	= \mathbb E\!\bigl[\mathbf g_t \mid \mathbf y_t;
	\widehat{\boldsymbol\theta}\bigr],
	\label{eq:empirical_bayes_ghat}
\end{equation}
for $t=1,\dots,T$, where $\mathbf Y = \{\mathbf y_t\}_{t=1}^{T}$. Thus, estimation and interpolation are two components of the same full-port posterior reconstruction.

\subsection{Reduced-Rank Spatial Prior from Clarke's Model}
\label{subsec:clarke_prior}

We construct the prior from the Clarke isotropic scattering model \cite{r7,BC}. For uniformly spaced ports, the spatial covariance is
\begin{equation}
	\mathbf \Sigma(\vartheta)
	=
	\begin{bmatrix}
		a_{\vartheta}(0) & a_{\vartheta}(1) & \cdots & a_{\vartheta}(N-1)\\
		a_{\vartheta}(-1) & a_{\vartheta}(0) & \cdots & a_{\vartheta}(N-2)\\
		\vdots & \vdots & \ddots & \vdots\\
		a_{\vartheta}(-(N-1)) & a_{\vartheta}(-(N-2)) & \cdots & a_{\vartheta}(0)
	\end{bmatrix},
	\label{eq:toeplitz_covariance}
\end{equation}
with entries
\begin{equation}
	a_\vartheta(\ell)
	= \operatorname{sinc}(2\pi d_\vartheta\,\ell),
	\label{eq:sinc_entry}
\end{equation}
where $d_\vartheta \triangleq \vartheta/(N-1)$ and $\operatorname{sinc}(x) \triangleq \sin(x)/x$. The effective normalized aperture $\vartheta$ equals $W$ under the ideal Clarke model. We later learn $\vartheta$ to accommodate model mismatch.

The sequence $a_\vartheta(\ell)$ is generated by a rectangular spatial power spectrum,
\begin{equation}
	a_\vartheta(\ell)
	= \frac{1}{2\pi}
	\int_{-\pi}^{\pi}
	f_\vartheta(\omega)\,e^{j\omega\ell}\,d\omega,
	\label{eq:fourier_coefficient_relation}
\end{equation}
where
\begin{equation}
	f_\vartheta(\omega)
	= \frac{1}{2d_\vartheta}\,
	\mathbf 1\!\bigl\{|\omega| \le 2\pi d_\vartheta\bigr\},
	\label{eq:rectangular_spatial_spectrum}
\end{equation}
and $\mathbf 1\{\cdot\}$ denotes the indicator function. Direct integration gives
\begin{equation}
	\frac{1}{2\pi}
	\int_{-2\pi d_\vartheta}^{2\pi d_\vartheta}
	\frac{e^{j\omega\ell}}{2d_\vartheta}\,d\omega
	= \frac{\sin(2\pi d_\vartheta\,\ell)}{2\pi d_\vartheta\,\ell}
	= \operatorname{sinc}(2\pi d_\vartheta\,\ell).
	\label{eq:rectangular_spectrum_proof}
\end{equation}

Since $f_\vartheta(\omega)$ is band-limited, $\mathbf\Sigma(\vartheta)$ is effectively low rank. Let $\mathbf F_N \in \mathbb C^{N\times N}$ be the unitary DFT matrix with frequencies $\omega_k = 2\pi k/N$, $k=0,\dots,N{-}1$. By Szeg\H{o}'s theorem \cite{Gray_Toeplitz}, the DFT asymptotically diagonalizes the Toeplitz covariance,
\begin{equation}
	\mathbf\Sigma(\vartheta)
	\approx
	\mathbf F_N\,
	\operatorname{diag}\!\bigl(
	f_\vartheta(\omega_0),\dots,f_\vartheta(\omega_{N-1})
	\bigr)\,
	\mathbf F_N^{\mathsf H}.
	\label{eq:dft_diagonalization}
\end{equation}
Define the active DFT-mode set
\begin{equation}
	\mathcal K(\vartheta)
	\triangleq
	\bigl\{k : |\omega_k| \le 2\pi d_\vartheta\bigr\},
	\label{eq:active_dft_support}
\end{equation}
and let $r = |\mathcal K(\vartheta)|$. Its size depends on aperture rather than port count,
\begin{equation}
	r \approx 2\vartheta + 1 \ll N.
	\label{eq:effective_rank}
\end{equation}

Accordingly, let
\begin{equation}
	\mathbf B(\vartheta)
	\triangleq \mathbf F_{N,\mathcal K(\vartheta)}
	\in \mathbb C^{N\times r}
	\label{eq:reduced_basis}
\end{equation}
contain the active DFT columns, and model the channel as
\begin{equation}
	\mathbf g_t = \mathbf B(\vartheta)\,\mathbf z_t,
	\label{eq:latent_modal_model}
\end{equation}
where $\mathbf z_t \in \mathbb C^{r}$ follows
\begin{equation}
	\mathbf z_t \sim \mathcal{CN}(\mathbf 0,\,\mathbf P),
	\label{eq:zt_prior}
\end{equation}
with $\mathbf P = \operatorname{diag}(p_1,\dots,p_r) \succeq \mathbf 0$. Under the ideal Clarke model, the modal powers are approximately uniform over the active band. Here, $\mathbf P$ remains diagonal but unknown, preserving the low-dimensional structure while adapting to the channel statistics. The resulting covariance is
\begin{equation}
	\mathbf\Sigma(\boldsymbol\theta)
	= \mathbf B(\vartheta)\,\mathbf P\,\mathbf B(\vartheta)^{\mathsf H},
	\label{eq:reduced_covariance}
\end{equation}
which provides a learned rank-$r$ approximation to the Clarke covariance. The subsequent inference framework also applies to other orthonormal bases, including the exact eigenbasis. 

We illustrate the finite modal effect in Fig.~\ref{fig:Modal_Sparsity}. Since the Clarke spatial spectrum
$f_\vartheta(\omega)$ is strictly band-limited to
$|\omega|\le 2\pi d_\vartheta$, only $r\approx 2\vartheta+1$ of the $N$
DFT frequencies fall within its support. By Szeg\H{o}'s
theorem~\cite{Gray_Toeplitz}, the eigenvalues of the Toeplitz covariance
$\mathbf\Sigma(\vartheta)$ converge to samples of $f_\vartheta$, so
exactly $r$ eigenvalues are significant while the remaining $N-r$ vanish
asymptotically. The effective rank $r$ is therefore set by the
\emph{physical aperture} $\vartheta$ and is \emph{independent} of the
port count $N$, a separation corroborated by
Fig.~\ref{fig:Modal_Sparsity} across all tested configurations. This
decoupling is the key enabler of the reduced-rank model: the latent
dimension $r$ remains small even when $N$ is very large.
\begin{figure}[t!]
	\centering
	\includegraphics[width=\columnwidth]{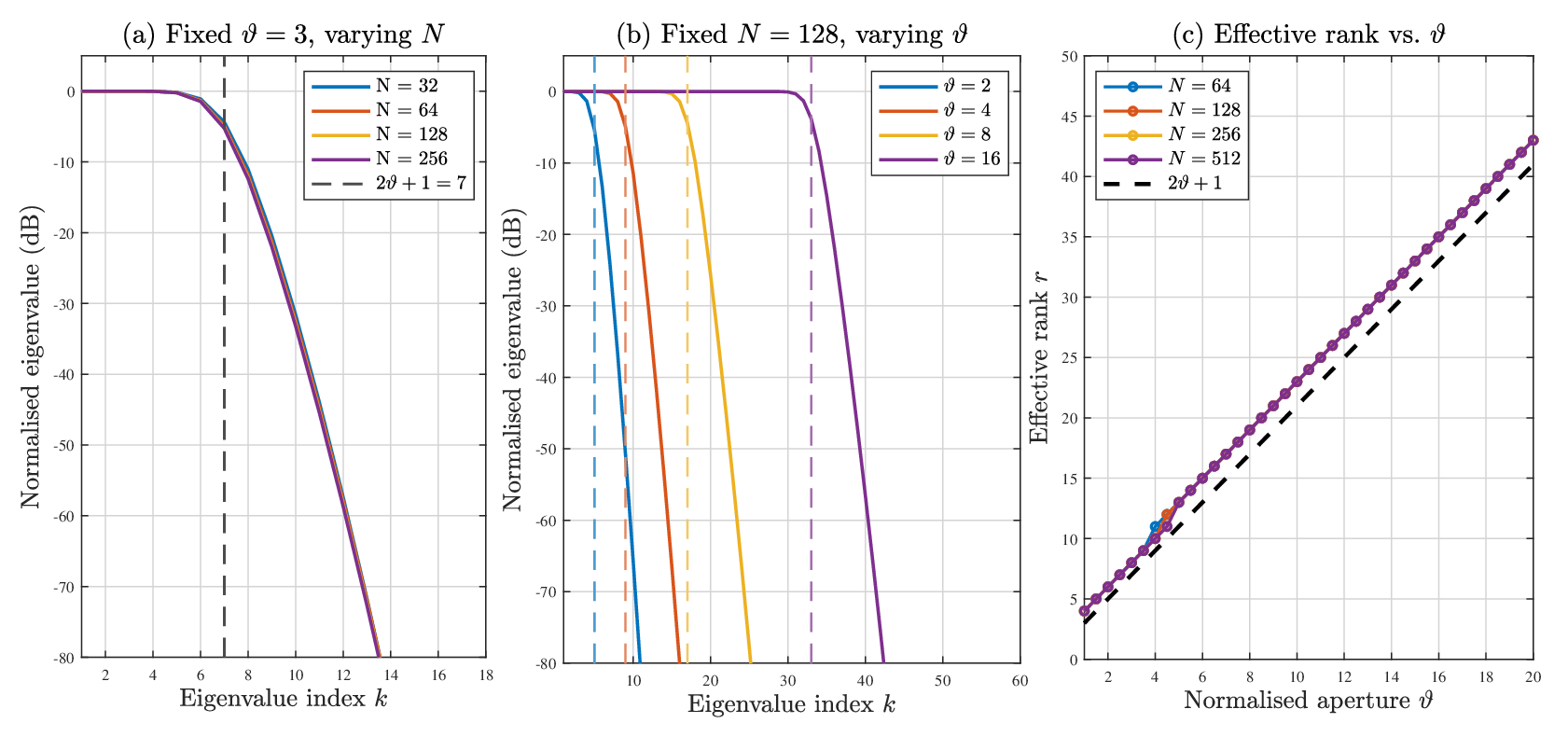}
	\caption{Eigenvalue spectra and effective ranks of the Clarke covariance $\mathbf{\Sigma}(\vartheta)$. (a) Normalized eigenvalues for $\vartheta=3$ and $N\in\{32,64,128,256\}$. (b) Normalized eigenvalues for $N=128$ and $\vartheta\in\{2,4,8,16\}$. Dashed lines indicate $2\vartheta+1$. (c) Effective rank, defined by eigenvalues above $-20$ dB relative to the maximum, versus $\vartheta$ for $N\in\{64,128,256,512\}$.}
	\label{fig:Modal_Sparsity}
\end{figure}

\subsection{Generalized Observation Model in the Modal Domain}
\label{subsec:signal_model}

Direct port sampling in \eqref{eq:direct_sampling_model} is a special case of a more general receiver model. Let $\mathbf\Phi_t \in \mathbb C^{L_t \times M_t}$ denote a linear front-end operator and define
\begin{equation}
	\mathbf\Psi_t
	\triangleq \mathbf\Phi_t\,\mathbf S_t
	\in \mathbb C^{L_t \times N}.
	\label{eq:overall_sensing_operator}
\end{equation}
Here, $L_t$ is the number of measurements at snapshot $t$. Without front-end processing, $\mathbf\Phi_t = \mathbf I_{M_t}$, such that $L_t = M_t$ and $\mathbf\Psi_t = \mathbf S_t$.

Substituting \eqref{eq:latent_modal_model} into the observation model gives
\begin{equation}
	\mathbf y_t
	= \mathbf\Psi_t\,\mathbf B(\vartheta)\,\mathbf z_t + \mathbf n_t
	= \mathbf A_t(\vartheta)\,\mathbf z_t + \mathbf n_t,
	\label{eq:modal_observation}
\end{equation}
where
\begin{equation}
	\mathbf A_t(\vartheta)
	\triangleq \mathbf\Psi_t\,\mathbf B(\vartheta)
	\in \mathbb C^{L_t \times r}
	\label{eq:effective_measurement_matrix}
\end{equation}
is the modal-domain sensing matrix, and $\mathbf n_t \sim \mathcal{CN}(\mathbf 0,\,\sigma^2\mathbf I_{L_t})$. The original $N$-dimensional reconstruction problem is therefore reduced to an $r$-dimensional linear system, where $r \approx 2W{+}1$ is typically small. We define the signal-to-noise ratio as $\mathrm{SNR} \triangleq \mathrm{tr}(\boldsymbol{\Sigma})/(N\sigma^2)$, namely the average per-port signal power divided by the noise variance.

\section{Performance Bounds}
\label{sec:performance_bounds}

This section characterizes the achievable reconstruction accuracy under
the reduced-rank model and the resources required to approach it. We
decompose the MSE into three resource-dependent terms, derive oracle
bounds, establish a sharp transition at $M = r$.

\subsection{Three-Term MSE Decomposition}
\label{subsec:mse_decomposition}

Let $\boldsymbol\Sigma_{\mathrm{true}}$ denote the true channel
covariance with eigenvalues
$\lambda_1 \ge \lambda_2 \ge \cdots \ge \lambda_N \ge 0$, and let
$\boldsymbol\Pi_{\!B} = \mathbf B\mathbf B^{\mathsf H}$ project onto
the reduced basis
$\mathbf B(\vartheta) \in \mathbb C^{N \times r}$.

\begin{proposition}[MSE decomposition]
	\label{prop:mse_decomposition}
	The full-port MSE of
	$\widehat{\mathbf g}_t = \mathbf B(\widehat\vartheta)\,
	\widehat{\mathbf z}_t$ satisfies
	\begin{equation}
		\mathrm{MSE}
		\triangleq
		\mathbb E\bigl[\|\mathbf g_t - \widehat{\mathbf g}_t\|^2\bigr]
		=
		\underbrace{\epsilon_{\mathrm{sub}}}_{\text{truncation}}
		+ \underbrace{\epsilon_{\mathrm{est}}}_{\text{estimation}}
		+ \underbrace{\epsilon_{\mathrm{learn}}}_{\text{learning}},
		\label{eq:mse_three_term}
	\end{equation}
	where
	\begin{align}
		\epsilon_{\mathrm{sub}}
		&= \operatorname{tr}\!\bigl(
		(\mathbf I_N - \boldsymbol\Pi_{\!B})\,
		\boldsymbol\Sigma_{\mathrm{true}}\,
		(\mathbf I_N - \boldsymbol\Pi_{\!B})
		\bigr),
		\label{eq:eps_sub}
		\\
		\epsilon_{\mathrm{est}}
		&= \operatorname{tr}\!\biggl(
		\Bigl(
		\mathbf P_{\mathrm{true}}^{-1}
		+ \frac{1}{\sigma^2}\,
		\mathbf G^{\mathsf H}\mathbf G
		\Bigr)^{\!-1}
		\biggr),
		\label{eq:eps_est}
		\\
		\epsilon_{\mathrm{learn}}
		&= \mathrm{MSE}(\widehat{\mathbf P},\widehat\sigma^2)
		- \mathrm{MSE}(\mathbf P_{\mathrm{true}},\sigma^2_{\mathrm{true}})
		\ge 0.
		\label{eq:eps_learn}
	\end{align}
	Here $\mathbf G = \mathbf S_{\mathcal O}\,\mathbf B(\vartheta)
	\in \mathbb C^{M \times r}$ is the effective sensing matrix,
	$\mathbf P_{\mathrm{true}}
	= \mathbf B^{\mathsf H}\boldsymbol\Sigma_{\mathrm{true}}\mathbf B$
	is the oracle modal covariance, and
	$\epsilon_{\mathrm{learn}}$ is the excess MSE caused by learned
	rather than oracle parameters.
\end{proposition}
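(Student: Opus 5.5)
The plan is to split the error orthogonally along the reduced subspace and its complement, and then to treat $\epsilon_{\mathrm{learn}}$ as the definitional remainder. Throughout I take $\widehat\vartheta=\vartheta$ (so that $\mathbf B$ is fixed), assume $\mathbf B$ has orthonormal columns so that $\boldsymbol\Pi_{\!B}$ is an orthogonal projector, and assume $\mathbf P_{\mathrm{true}}\succ\mathbf 0$. The estimate $\widehat{\mathbf g}_t=\mathbf B\widehat{\mathbf z}_t$ lies in $\operatorname{range}(\mathbf B)$, so
\begin{equation*}
\mathbf g_t-\widehat{\mathbf g}_t=(\mathbf I_N-\boldsymbol\Pi_{\!B})\mathbf g_t+\mathbf B(\mathbf z_t^\star-\widehat{\mathbf z}_t),
\end{equation*}
where $\mathbf z_t^\star\triangleq\mathbf B^{\mathsf H}\mathbf g_t$. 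The two summands are orthogonal pointwise, so the cross term vanishes and $\|\mathbf B(\cdot)\|=\|\cdot\|$. Taking expectations, the first summand contributes
\begin{equation*}
\mathbb E\|(\mathbf I_N-\boldsymbol\Pi_{\!B})\mathbf g_t\|^2=\operatorname{tr}\bigl((\mathbf I_N-\boldsymbol\Pi_{\!B})\boldsymbol\Sigma_{\mathrm{true}}(\mathbf I_N-\boldsymbol\Pi_{\!B})\bigr)=\epsilon_{\mathrm{sub}},
\end{equation*}
which is \eqref{eq:eps_sub}. What remains is the modal error $\mathbb E\|\mathbf z_t^\star-\widehat{\mathbf z}_t\|^2$. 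Note that $\mathbf z_t^\star\sim\mathcal{CN}(\mathbf 0,\mathbf P_{\mathrm{true}})$ with $\mathbf P_{\mathrm{true}}=\mathbf B^{\mathsf H}\boldsymbol\Sigma_{\mathrm{true}}\mathbf B$.

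Next I would evaluate the modal error for the oracle posterior mean, computed with $(\mathbf P_{\mathrm{true}},\sigma^2_{\mathrm{true}})$ under the reduced model $\mathbf y_t=\mathbf G\mathbf z_t^\star+\mathbf n_t$ with $\mathbf G=\mathbf S_{\mathcal O}\mathbf B$. This is the standard linear-Gaussian LMMSE computation:
\begin{equation*}
\widehat{\mathbf z}_t=\Bigl(\mathbf P_{\mathrm{true}}^{-1}+\tfrac{1}{\sigma^2}\mathbf G^{\mathsf H}\mathbf G\Bigr)^{-1}\tfrac{1}{\sigma^2}\mathbf G^{\mathsf H}\mathbf y_t .
\end{equation*}
The Woodbury identity gives the error covariance $(\mathbf P_{\mathrm{true}}^{-1}+\sigma^{-2}\mathbf G^{\mathsf H}\mathbf G)^{-1}$, and its trace is $\epsilon_{\mathrm{est}}$ in \eqref{eq:eps_est}. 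I would then define $\mathrm{MSE}(\mathbf P,\sigma^2)$ as the full-port MSE obtained by plugging the parameters $(\mathbf P,\sigma^2)$ into the posterior-mean rule. With that definition, \eqref{eq:mse_three_term} holds by construction once $\epsilon_{\mathrm{learn}}$ is set to the difference in \eqref{eq:eps_learn}. Nonnegativity of $\epsilon_{\mathrm{learn}}$ follows because the oracle posterior mean is the MMSE estimator within the reduced model: any other linear rule $\mathbf K\mathbf y_t$ has error covariance $\mathbf C_{\mathrm{oracle}}+(\mathbf K-\mathbf K^\star)\mathbf R_y(\mathbf K-\mathbf K^\star)^{\mathsf H}$, where $\mathbf R_y$ is the covariance of $\mathbf y_t$. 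The truncation term is untouched by the choice of $\mathbf K$.

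The main obstacle is making the oracle step exact rather than approximate. The observation actually contains $\mathbf S_{\mathcal O}(\mathbf I_N-\boldsymbol\Pi_{\!B})\mathbf g_t$, which acts as extra "noise" correlated with nothing in $\mathbf z_t^\star$ only if $\boldsymbol\Sigma_{\mathrm{true}}$ is block-diagonal with respect to $\boldsymbol\Pi_{\!B}$. That holds when $\mathbf B$ spans an eigenspace of $\boldsymbol\Sigma_{\mathrm{true}}$; the paper notes the framework applies to the exact eigenbasis. I would therefore state the decomposition under the assumptions that $\boldsymbol\Sigma_{\mathrm{true}}$ and $\boldsymbol\Pi_{\!B}$ commute and that the leaked out-of-subspace energy is absorbed into the effective $\sigma^2_{\mathrm{true}}$, i.e.\ $\epsilon_{\mathrm{sub}}$ is negligible at the observed ports. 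If these assumptions are dropped, any residual cross-effect is simply charged to $\epsilon_{\mathrm{learn}}$, which remains nonnegative by the same optimality argument taken relative to the reduced-model oracle. A second caveat is that if $\widehat\vartheta\neq\vartheta$, the first step must use $\mathbf B(\widehat\vartheta)$, and the resulting change in the truncation and estimation terms is likewise absorbed into $\epsilon_{\mathrm{learn}}$.
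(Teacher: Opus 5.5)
Your main line is the paper's proof. It uses the same pointwise orthogonal split along $\operatorname{col}(\mathbf B)$ for $\epsilon_{\mathrm{sub}}$, and the same posterior-covariance trace for $\epsilon_{\mathrm{est}}$ once the leakage $\mathbf S_{\mathcal O}\mathbf g_t^{\perp}$ is folded into the noise (the paper's ``$\epsilon_{\mathrm{sub}}\ll\sigma^2$'' step). It also uses the same Bayes-optimality argument for $\epsilon_{\mathrm{learn}}\ge 0$, and you state the needed assumptions more explicitly than the paper does.

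The step that fails is your fallback: that when those assumptions are dropped, the residual can be charged to $\epsilon_{\mathrm{learn}}$ and stay nonnegative ``by the same optimality argument.'' That argument needs the oracle rule to be the conditional mean of $\mathbf z_t^\star$ under the \emph{actual} law of $\mathbf y_t$. Once $\boldsymbol\Pi_{\!B}\boldsymbol\Sigma_{\mathrm{true}}(\mathbf I_N-\boldsymbol\Pi_{\!B})\neq\mathbf 0$, the leakage $\mathbf S_{\mathcal O}\mathbf g_t^{\perp}$ carries information about $\mathbf z_t^\star$. The reduced model is then misspecified, and $\epsilon_{\mathrm{sub}}+\epsilon_{\mathrm{est}}$ is no longer a lower bound.

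Take $N=2$, $r=1$, $\mathbf B=(1,1)^{\mathsf T}/\sqrt2$, $\boldsymbol\Sigma_{\mathrm{true}}=\operatorname{diag}(2,0)$, $\sigma^2=1$, with port $1$ observed. Then $\mathbf P_{\mathrm{true}}=1$, $\mathbf G=1/\sqrt2$, $\epsilon_{\mathrm{sub}}=1$ and $\epsilon_{\mathrm{est}}=2/3$. Yet $y_t=g_t(1)+n_t=\sqrt2\,z_t^\star+n_t$, so the oracle plug-in rule $\widehat z_t=(\sqrt2/3)\,y_t$ has modal error $1/3$. Its total MSE is $4/3<5/3=\epsilon_{\mathrm{sub}}+\epsilon_{\mathrm{est}}$, so the amount you would charge to $\epsilon_{\mathrm{learn}}$ is $-1/3$.

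The same objection applies to your treatment of $\widehat\vartheta\neq\vartheta$. If $\mathcal K(\widehat\vartheta)\supsetneq\mathcal K(\vartheta)$, the estimate can recover energy outside $\operatorname{col}(\mathbf B(\vartheta))$ and drive the MSE below $\epsilon_{\mathrm{sub}}$ itself.

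So keep the result conditional; the paper's proof needs the same restriction. The clean exact hypothesis is $\mathbf S_{\mathcal O}(\mathbf I_N-\boldsymbol\Pi_{\!B})\mathbf g_t=\mathbf 0$ almost surely. Then $\mathbf y_t=\mathbf G\mathbf z_t^\star+\mathbf n_t$ holds exactly, and your commutation assumption becomes unnecessary: the cross term vanishes pointwise and $\mathbf g_t^{\perp}$ never enters the data.

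Separately, your identity $\mathbf C_{\mathrm{oracle}}+(\mathbf K-\mathbf K^\star)\mathbf R_y(\mathbf K-\mathbf K^\star)^{\mathsf H}$ only covers a fixed gain $\mathbf K$. The empirical-Bayes gain depends on $\widehat{\boldsymbol\theta}$ learned from $\mathbf Y\ni\mathbf y_t$, so it is a nonlinear function of the data. Use instead that, with Gaussian $\mathbf z_t^\star$ and independent snapshots, $\mathbb E[\mathbf z_t^\star\mid\mathbf y_t]=\mathbb E[\mathbf z_t^\star\mid\mathbf Y]$, which minimizes the MSE over all measurable estimators.
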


\begin{IEEEproof}
	Write
	$\mathbf g_t = \boldsymbol\Pi_{\!B}\mathbf g_t
	+ (\mathbf I - \boldsymbol\Pi_{\!B})\mathbf g_t
	\triangleq \mathbf g_t^{\parallel} + \mathbf g_t^{\perp}$.
	Since $\widehat{\mathbf g}_t \in \operatorname{col}(\mathbf B)$,
	the error decomposes orthogonally as
	\begin{equation}
		\|\mathbf g_t - \widehat{\mathbf g}_t\|^2
		= \|\mathbf g_t^{\perp}\|^2
		+ \|\mathbf g_t^{\parallel} - \widehat{\mathbf g}_t\|^2.
		\label{eq:orthogonal_decomp}
	\end{equation}
	Taking expectations yields
	$\epsilon_{\mathrm{sub}} = \mathbb E[\|\mathbf g_t^{\perp}\|^2]$.
	Let
	$\mathbf g_t^{\parallel} = \mathbf B\mathbf z_t^{\mathrm{true}}$,
	where
	$\mathbf z_t^{\mathrm{true}} = \mathbf B^{\mathsf H}\mathbf g_t$.
	The observation becomes
	$\mathbf y_t = \mathbf G\mathbf z_t^{\mathrm{true}}
	+ \mathbf S_{\mathcal O}\mathbf g_t^{\perp} + \mathbf n_t$.
	When $\epsilon_{\mathrm{sub}} \ll \sigma^2$, the residual term is
	absorbed into the noise floor. Under oracle modal parameters, the
	LMMSE covariance is
	$(\mathbf P_{\mathrm{true}}^{-1}
	+ \sigma^{-2}\mathbf G^{\mathsf H}\mathbf G)^{-1}$, whose trace
	gives $\epsilon_{\mathrm{est}}$. Since oracle parameters minimize the
	Bayes risk, the excess error $\epsilon_{\mathrm{learn}}$ is
	nonnegative.
\end{IEEEproof}

The three terms are controlled by distinct resources.
\begin{itemize}
	\item $\epsilon_{\mathrm{sub}}$ depends on the physical aperture $W$
	and modal dimension $r$.
	\item $\epsilon_{\mathrm{est}}$ depends on active-port selection,
	the number $M$, and SNR.
	\item $\epsilon_{\mathrm{learn}}$ depends on the available learning
	data $(T, M, \mathrm{SNR})$.
\end{itemize}
FAS reconstruction therefore involves a three-way tradeoff among
aperture, spatial observations, and training data.

\subsection{Subspace Truncation Error}
\label{subsec:subspace_truncation}

The truncation term measures the channel energy outside the reduced
modal subspace.

\begin{proposition}[Truncation error under Clarke-type correlation]
	\label{prop:truncation_error}
	Let $\boldsymbol\Sigma_{\mathrm{true}}$ be the Clarke covariance
	with aperture $W$, and let
	$\mathbf B(\vartheta) \in \mathbb C^{N \times r}$ collect the
	$r = |\mathcal K(\vartheta)|$ active DFT columns. Then
	\begin{enumerate}
		\item[\emph{(i)}]
		The truncation error is the spectral energy outside the modeled
		band,
		\begin{equation}
			\epsilon_{\mathrm{sub}}
			= \sum_{k \notin \mathcal K(\vartheta)}
			f_W(\omega_k)
			\approx \frac{N}{2\pi}
			\int_{|\omega| > 2\pi d_\vartheta}
			f_W(\omega)\,d\omega.
			\label{eq:eps_sub_spectral}
		\end{equation}
		
		\item[\emph{(ii)}]
		If $\vartheta \ge W$, then for large $N$,
		\begin{equation}
			\frac{\epsilon_{\mathrm{sub}}}
			{\operatorname{tr}(\boldsymbol\Sigma_{\mathrm{true}})}
			\le \frac{C}{N},
			\label{eq:truncation_rate}
		\end{equation}
		where $C > 0$ depends only on $W$. Hence
		$\epsilon_{\mathrm{sub}} \to 0$ as $N \to \infty$.
	\end{enumerate}
\end{proposition}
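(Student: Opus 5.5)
Part (i) follows from Szeg\H{o}'s approximation \eqref{eq:dft_diagonalization}, applied now to the true covariance with aperture $W$, so $\boldsymbol\Sigma_{\mathrm{true}}\approx \mathbf F_N\,\mathbf D_W\,\mathbf F_N^{\mathsf H}$ with $\mathbf D_W=\operatorname{diag}(f_W(\omega_0),\dots,f_W(\omega_{N-1}))$. Since $\mathbf B(\vartheta)$ consists of DFT columns, $\mathbf I_N-\boldsymbol\Pi_{\!B}=\mathbf F_N\mathbf E\,\mathbf F_N^{\mathsf H}$, where $\mathbf E$ is the diagonal $0/1$ mask selecting the indices $k\notin\mathcal K(\vartheta)$. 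Substituting into \eqref{eq:eps_sub} and using $\mathbf F_N^{\mathsf H}\mathbf F_N=\mathbf I_N$ and cyclicity of the trace gives
\[
\epsilon_{\mathrm{sub}}\approx\operatorname{tr}(\mathbf E\mathbf D_W\mathbf E)=\sum_{k\notin\mathcal K(\vartheta)} f_W(\omega_k).
\]
For the integral form, I will read the sum as a Riemann sum with spacing $2\pi/N$, i.e.\ $\sum_k f_W(\omega_k)\approx \frac{N}{2\pi}\int f_W(\omega)\,d\omega$ over the complementary band. To make this exact rather than heuristic, I would instead compute $\epsilon_{\mathrm{sub}}=\sum_{k\notin\mathcal K}\mathbf f_k^{\mathsf H}\boldsymbol\Sigma_{\mathrm{true}}\mathbf f_k$. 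Each term is a Fej\'er-kernel smoothing of the spectrum, $\mathbf f_k^{\mathsf H}\boldsymbol\Sigma\mathbf f_k=\frac{1}{2\pi}\int f_W(\omega)\,K_N(\omega-\omega_k)\,d\omega$, which justifies the approximation sign in (i).

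For part (ii), assume $\vartheta\ge W$. Then $d_\vartheta\ge d_W$, so the band of $f_W$ lies inside the modeled band and the ideal spectrum contributes zero outside $\mathcal K(\vartheta)$. The truncation error therefore comes entirely from the finite-$N$ discrepancy between $\mathbf f_k^{\mathsf H}\boldsymbol\Sigma\mathbf f_k$ and $f_W(\omega_k)$, which is the leakage of the Fej\'er kernel. The plan is to bound
\[
\sum_{k\notin\mathcal K}\frac{1}{2\pi}\int_{|\omega|\le 2\pi d_W} f_W(\omega)\,K_N(\omega-\omega_k)\,d\omega .
\]
Two facts about the kernel will be used: $K_N(\nu)\le \frac{\pi^2}{N\nu^2}$ for $|\nu|\le\pi$, and $f_W=\frac{1}{2d_W}$ on a band of width $4\pi d_W=4\pi W/(N-1)$. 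Writing $m$ for the distance of the index $k$ from the band edge, each outside index contributes roughly
\[
\frac{1}{2d_W}\cdot\frac{1}{2\pi}\cdot\frac{4\pi W}{N}\cdot\frac{\pi^2}{N(2\pi m/N)^2}\sim \frac{N}{m^2}\cdot c(W).
\]
Summing over $m\ge1$ gives a bound of order $N\cdot c'(W)/N$ once the correct normalisation is tracked, that is, $O(1)$ in absolute terms. Since $\operatorname{tr}(\boldsymbol\Sigma_{\mathrm{true}})=N\,a_W(0)=N$, this yields the ratio bound $C/N$ with $C$ depending only on $W$.

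The main obstacle is the indices adjacent to the band edge ($m$ small), where the $1/\nu^2$ kernel bound blows up. For these I would use the trivial bound $\mathbf f_k^{\mathsf H}\boldsymbol\Sigma\mathbf f_k\le\lambda_{\max}(\boldsymbol\Sigma)$. Since $\lambda_{\max}$ is about $\frac{N}{2W+1}$, this is too crude, so instead I would appeal to the Slepian/prolate concentration (Landau--Widom) result: the number of eigenvalues of the $N\times N$ sinc-Toeplitz matrix lying in $(\epsilon,\,\max-\epsilon)$ is $O(\log N)$, and the energy lying beyond the index $2W+1+O(\log(1/\epsilon))$ is uniformly bounded. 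Combined with the tail bound above, this gives a bound on $\epsilon_{\mathrm{sub}}$ that is independent of $N$, which is what (ii) requires. If a pure $O(1)$ constant proves elusive, I would fall back to $O(\log N)$ and note that the $C/N$ rate holds up to logarithmic factors.
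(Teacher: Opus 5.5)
Your part (i) follows the same route as the paper: Szeg\H{o} diagonalization, the DFT projector $\mathbf I_N-\boldsymbol\Pi_{\!B}$, and summation outside $\mathcal K(\vartheta)$. Your Fej\'er-kernel identity $\epsilon_{\mathrm{sub}}=\sum_{k\notin\mathcal K}\mathbf f_k^{\mathsf H}\boldsymbol\Sigma_{\mathrm{true}}\mathbf f_k$ is in fact exact. The genuine gap is in (ii), at ``summing over $m\ge 1$ gives \dots\ $O(1)$ once the correct normalisation is tracked.'' Your own per-bin estimate is $\frac{1}{2d_W}\cdot\frac{1}{2\pi}\cdot 4\pi d_W\cdot\frac{\pi^2}{N(2\pi m/N)^2}=\frac{N}{4m^2}$, so the sum over $m$ is of order $N$. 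There is no hidden factor $1/N$.

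This is not slack in the kernel bound. Because $d_W=W/(N-1)$, the symbol $f_W$ depends on $N$: for every $N$ it has height $\approx N/(2W)$ on only $\approx 2W$ DFT bins. The Fej\'er main lobe, of width of order one bin, therefore never resolves the band edge. The Szeg\H{o}/Riemann-sum step, which returns exactly $0$ outside the band when $\vartheta\ge W$, discards precisely the quantity in question. Evaluating your exact formula with $\omega=2\pi u/N$ gives, for each fixed signed index $k$,
\[
\frac{1}{N}\,\mathbf f_k^{\mathsf H}\boldsymbol\Sigma_{\mathrm{true}}\mathbf f_k\;\to\;\frac{1}{2W}\int_{-W}^{W}\frac{\sin^2(\pi(u-k))}{\pi^2(u-k)^2}\,du,
\]
and hence
\[
\frac{\epsilon_{\mathrm{sub}}}{\operatorname{tr}(\boldsymbol\Sigma_{\mathrm{true}})}\;\to\;\frac{1}{2W}\int_{-W}^{W}\sum_{|k|>\lfloor\vartheta\rfloor}\frac{\sin^2(\pi(u-k))}{\pi^2(u-k)^2}\,du\;>\;0.
\]
This limit is about $5\%$ for $W=\vartheta=2$ (the paper's numerical setting), for every large $N$. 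It is simply the $1/k^2$ Fourier-series leakage of a band-limited field observed on the non-periodic window $[0,W]$.

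The Landau--Widom fallback does not rescue this. The $O(\log N)$ plunge-region count holds for a fixed fractional bandwidth. Here the time--bandwidth product $2Nd_W\to 2W$ is fixed instead. In that regime $\frac{2W}{N}\lambda_k(\boldsymbol\Sigma_{\mathrm{true}})$ converges to the continuous prolate eigenvalues with $c=\pi W$, all of which are strictly positive. So even the optimal rank-$r$ error $\sum_{k>r}\lambda_k$ is of order $N$. Moreover, eigenvalue results can only lower-bound the error of the DFT basis $\mathbf B(\vartheta)$; they cannot supply the upper bound you need.

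Neither $O(1)$ nor $O(\log N)$ is therefore attainable, and for fixed $W,\vartheta$ part (ii) does not hold. Even a ratio bound $C/N$ with $\operatorname{tr}(\boldsymbol\Sigma_{\mathrm{true}})=N$ would only give $\epsilon_{\mathrm{sub}}=O(1)$, not $\epsilon_{\mathrm{sub}}\to 0$. The paper's proof does not close this gap: it simply asserts that band-edge leakage ``decays as $\mathcal O(1/N)$,'' and its Remark downgrades the rate to an idealized approximation.

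What you can actually prove is a bound uniform in $N$ that is controlled by the guard margin $\vartheta-W$. For DFT columns the limit above decays only on the order of $1/(\vartheta-W)$. Alternatively, if the prolate eigenbasis replaces the DFT columns, the error decays rapidly in $r-2W$.
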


\begin{IEEEproof}
	By Szeg\H{o}'s theorem, the diagonal entries of
	$\mathbf F_N^{\mathsf H}\boldsymbol\Sigma_{\mathrm{true}}\mathbf F_N$
	converge to samples of $f_W(\omega)$ on the DFT grid. Summing the
	energy outside $\mathcal K(\vartheta)$ gives
	\eqref{eq:eps_sub_spectral}. For $\vartheta \ge W$, the rectangular
	spectrum $f_W(\omega)$ vanishes outside the modeled band. The
	remaining DFT approximation error arises from band-edge leakage and
	decays as $\mathcal O(1/N)$.
\end{IEEEproof}

Thus, when $r \ge 2W{+}1$ and $N$ is sufficiently large, truncation is
negligible and the dominant errors are $\epsilon_{\mathrm{est}}$ and
$\epsilon_{\mathrm{learn}}$.

\begin{remark}[Effective rank and truncation error]
	The relation $r=2\lfloor W\rfloor+1$ specifies the effective modal
	dimension of the Clarke covariance, not necessarily its exact
	finite-$N$ rank. If the true covariance has rank $r$ and range
	contained in $\operatorname{col}(\mathbf B)$, then
	$\epsilon_{\mathrm{sub}}=0$. Otherwise, finite-dimensional leakage
	and basis mismatch yield nonzero truncation error. The
	$\mathcal O(1/N)$ rate is an idealized asymptotic approximation, and
	the reduced-rank NMSE includes both $\epsilon_{\mathrm{sub}}$ and
	$\epsilon_{\mathrm{est}}$.
\end{remark}

\subsection{Oracle Reduced-Rank MMSE}
\label{subsec:oracle_reduced_mmse}

We next characterize $\epsilon_{\mathrm{est}}$ assuming oracle
hyperparameters.

\begin{proposition}[Oracle reduced-rank MMSE]
	\label{prop:oracle_reduced_mmse}
	Assume $\epsilon_{\mathrm{sub}} \approx 0$. Given $M$ observed ports
	with selection matrix $\mathbf S_{\mathcal O}$, the oracle
	reduced-rank MMSE is
	\begin{equation}
		\mathrm{MSE}_{\mathrm{oracle\text{-}rr}}
		= \sum_{k=1}^{r}
		\frac{p_k\,\sigma^2}{\sigma^2 + p_k\,\mu_k},
		\label{eq:oracle_rr_mmse}
	\end{equation}
	where $\mu_1 \ge \cdots \ge \mu_r \ge 0$ are the eigenvalues of
	$\mathbf G^{\mathsf H}\mathbf G \in \mathbb C^{r \times r}$ and
	$\{p_k\}$ are the oracle modal powers. The normalized MSE is
	\begin{equation}
		\mathrm{NMSE}_{\mathrm{oracle\text{-}rr}}
		= \frac{1}{\sum_{k=1}^{r} p_k}
		\sum_{k=1}^{r}
		\frac{p_k\,\sigma^2}{\sigma^2 + p_k\,\mu_k}.
		\label{eq:oracle_rr_nmse}
	\end{equation}
\end{proposition}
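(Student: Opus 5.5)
The plan is to start from the estimation term of Proposition~\ref{prop:mse_decomposition}. With $\epsilon_{\mathrm{sub}}\approx 0$ and oracle parameters (so $\epsilon_{\mathrm{learn}}=0$), the full-port MSE reduces to $\epsilon_{\mathrm{est}}=\operatorname{tr}\bigl((\mathbf P_{\mathrm{true}}^{-1}+\sigma^{-2}\mathbf G^{\mathsf H}\mathbf G)^{-1}\bigr)$.

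First, I will justify that the full-port error equals the modal-domain error. Because $\mathbf B$ consists of columns of the unitary DFT matrix, $\mathbf B^{\mathsf H}\mathbf B=\mathbf I_r$. Hence $\|\mathbf B\mathbf z_t-\mathbf B\widehat{\mathbf z}_t\|^2=\|\mathbf z_t-\widehat{\mathbf z}_t\|^2$, and the MMSE of $\mathbf g_t$ equals that of $\mathbf z_t$ in the linear Gaussian model $\mathbf y_t=\mathbf G\mathbf z_t+\mathbf n_t$. The standard Gaussian posterior-covariance formula then gives the trace expression above. If some $p_k=0$, I will use the equivalent form $\mathbf P-\mathbf P\mathbf G^{\mathsf H}(\mathbf G\mathbf P\mathbf G^{\mathsf H}+\sigma^2\mathbf I)^{-1}\mathbf G\mathbf P$. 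This form avoids $\mathbf P^{-1}$, and the corresponding summands vanish, consistent with \eqref{eq:oracle_rr_mmse}.

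Second, I will reduce the matrix trace to the scalar sum. Eigendecompose $\mathbf G^{\mathsf H}\mathbf G=\mathbf U\operatorname{diag}(\mu_1,\dots,\mu_r)\mathbf U^{\mathsf H}$. If $\mathbf P_{\mathrm{true}}$ is diagonal in the same basis $\mathbf U$, with entries $p_k$, the matrix inside the inverse is $\mathbf U\operatorname{diag}(p_k^{-1}+\mu_k/\sigma^2)\mathbf U^{\mathsf H}$. Its inverse then has trace $\sum_k (p_k^{-1}+\mu_k/\sigma^2)^{-1}=\sum_k p_k\sigma^2/(\sigma^2+p_k\mu_k)$.

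This joint-diagonalization step is the main obstacle, since in general $\mathbf P_{\mathrm{true}}$ and $\mathbf G^{\mathsf H}\mathbf G$ need not commute. I would handle it in two ways. The clean case is the ideal Clarke model, where the modal powers are uniform over the active band; then $\mathbf P_{\mathrm{true}}=p\mathbf I_r$ commutes with everything and the formula is exact. For general diagonal $\mathbf P$, I would either state the result with $p_k$ understood as the modal powers expressed in the eigenbasis of $\mathbf G^{\mathsf H}\mathbf G$ (the diagonal of $\mathbf U^{\mathsf H}\mathbf P\mathbf U$), or note that the sum is exact whenever the selected ports make $\mathbf G^{\mathsf H}\mathbf G$ diagonal in the DFT modal basis. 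An example is uniformly spaced active ports with $M\ge r$, for which the DFT columns restricted to the active ports stay orthogonal. Where neither condition holds, I would note that the expression is the natural approximation. If a bound is wanted instead, Jensen or majorization arguments on the trace could bound the error in either direction.

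Finally, the NMSE follows by dividing by $\operatorname{tr}(\boldsymbol\Sigma)$. Under $\epsilon_{\mathrm{sub}}\approx 0$ this equals $\operatorname{tr}(\mathbf B\mathbf P\mathbf B^{\mathsf H})=\operatorname{tr}(\mathbf P)=\sum_k p_k$, which gives \eqref{eq:oracle_rr_nmse}. As a sanity check I will read off the threshold behavior: when $M<r$, at least $r-M$ of the $\mu_k$ are zero, so those summands equal $p_k$ regardless of $\sigma^2$. This anticipates the feasibility threshold at $M=r$.
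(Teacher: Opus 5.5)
Your argument follows the paper's proof: the Gaussian posterior covariance $(\mathbf P^{-1}+\sigma^{-2}\mathbf G^{\mathsf H}\mathbf G)^{-1}$, then $\mathbf B^{\mathsf H}\mathbf B=\mathbf I_r$ to equate full-port and modal error, then the trace. It is correct under the hypotheses you impose.

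Where you depart is the last step, and your caution is justified. The paper simply asserts that ``for diagonal $\mathbf P$'' the trace equals \eqref{eq:oracle_rr_mmse}, but this needs $\mathbf P$ to commute with $\mathbf G^{\mathsf H}\mathbf G$. For example, take $r=2$, $\sigma^2=1$, $\mathbf G=[1\;\;1]$ (so $\mu=(2,0)$) and $\mathbf P=\operatorname{diag}(2,1)$. The posterior trace is $3-\tfrac54=\tfrac74$, whereas the sum gives $\tfrac75$ or $\tfrac73$, depending on how the $p_k$ are paired with the $\mu_k$. So the identity is exact only under hypotheses like yours. One is the flat case $\mathbf P=p\mathbf I_r$, which is all that Corollary~\ref{cor:flat_spectrum} and Proposition~\ref{prop:uniform_selection} actually use. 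The other is $\mathbf G^{\mathsf H}\mathbf G$ diagonal in the modal basis.

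Three of your fallback remarks need tightening.

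\emph{Diagonal of $\mathbf U^{\mathsf H}\mathbf P\mathbf U$.} Taking $p_k$ to be these diagonal entries does not give an identity. In the coordinates $\mathbf w=\mathbf U^{\mathsf H}\mathbf z$, the data reduce to independent scalar observations $\sqrt{\mu_k}\,w_k+n_k'$. The joint MMSE is at most the sum of the per-coordinate scalar MMSEs, so this version is an upper bound. It is tight when $\mathbf U^{\mathsf H}\mathbf P\mathbf U$ is diagonal; in the example above it gives $\tfrac{15}{8}>\tfrac74$.

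\emph{Diagonal but non-scalar $\mathbf G^{\mathsf H}\mathbf G$.} Here $\mu_k$ must be its $k$th diagonal entry, matched to $p_k$ by mode index. It is not the $k$th largest eigenvalue, as the sorted order in the statement suggests.

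\emph{Uniform subsampling.} Uniformly subsampled DFT columns are exactly orthogonal only when $M$ divides $N$. In that case $\mathbf G^{\mathsf H}\mathbf G=(M/N)\mathbf I_r$, because distinct active indices differ by at most $r-1<M$. With the floor indexing, orthogonality is only approximate, as the paper's ``$\approx$'' reflects.
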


\begin{IEEEproof}
	Under $\mathbf y_t = \mathbf G\mathbf z_t + \mathbf n_t$, with
	$\mathbf z_t \sim \mathcal{CN}(\mathbf 0, \mathbf P)$ and
	$\mathbf n_t \sim \mathcal{CN}(\mathbf 0, \sigma^2\mathbf I_M)$,
	the posterior covariance is
	$\mathbf C_z = (\mathbf P^{-1}
	+ \sigma^{-2}\mathbf G^{\mathsf H}\mathbf G)^{-1}$.
	Since $\mathbf B^{\mathsf H}\mathbf B = \mathbf I_r$, the full-port
	MSE is $\operatorname{tr}(\mathbf C_z)$. For diagonal $\mathbf P$,
	this trace yields \eqref{eq:oracle_rr_mmse}.
\end{IEEEproof}

The oracle performance is governed by the {\em modal SNRs}
$\mathrm{SNR_m}=p_k / \sigma^2$ and the sensing eigenvalues $\{\mu_k\}$.

\begin{corollary}[Flat spectrum, ideal Clarke case]
	\label{cor:flat_spectrum}
	When $p_k = p$ for all $k$,
	\begin{equation}
		\mathrm{NMSE}_{\mathrm{oracle\text{-}rr}}
		= \frac{1}{r} \sum_{k=1}^{r}
		\frac{1}{1 + \mathrm{SNR_m} \cdot \mu_k},
		\label{eq:nmse_flat_spectrum}
	\end{equation}
	where the modal SNR equals to $p / \sigma^2$.
\end{corollary}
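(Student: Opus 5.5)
The corollary follows by direct substitution into Proposition~\ref{prop:oracle_reduced_mmse}, so I will simply specialize \eqref{eq:oracle_rr_nmse} to $p_k = p$. I will start from the normalized form \eqref{eq:oracle_rr_nmse}. Under the flat-spectrum hypothesis, the normalizing denominator becomes $\sum_{k=1}^{r} p_k = r p$. Each summand becomes $p\sigma^2/(\sigma^2 + p\mu_k)$.

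Next I will divide the numerator and denominator of each summand by $\sigma^2$. This gives
\begin{equation*}
\frac{p\,\sigma^2}{\sigma^2 + p\,\mu_k} = \frac{p}{1 + (p/\sigma^2)\,\mu_k}.
\end{equation*}
Multiplying by the prefactor $1/(rp)$ cancels the factor $p$. Identifying $p/\sigma^2$ with the modal SNR $\mathrm{SNR_m}$ defined just after Proposition~\ref{prop:oracle_reduced_mmse} then yields \eqref{eq:nmse_flat_spectrum}.

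There is no real obstacle; the only point needing care is that the divisions are legitimate. This requires $p>0$, so that $\mathbf P^{-1}$ in the proof of Proposition~\ref{prop:oracle_reduced_mmse} exists, and $\sigma^2>0$. Both are implicit in the ideal Clarke setting, where the modal powers are uniform and positive over the active band.

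I will also note two sanity checks that follow immediately. If some $\mu_k = 0$, which is unavoidable when $M<r$ since $\mathbf G^{\mathsf H}\mathbf G$ then has rank at most $M$, the corresponding term equals $1$ for every SNR. That is consistent with the threshold behaviour announced in the introduction.
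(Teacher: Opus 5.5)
Your proof is correct and is the same route the paper intends: a direct specialization of Proposition~\ref{prop:oracle_reduced_mmse} using $\sum_k p_k = rp$ and $\mathrm{SNR_m}=p/\sigma^2$. You can make it independent of Proposition~\ref{prop:oracle_reduced_mmse}'s pairing of $p_k$ with $\mu_k$, which is exact only when $\mathbf P$ commutes with $\mathbf G^{\mathsf H}\mathbf G$: here $\mathbf P=p\mathbf I_r$, so $\operatorname{tr}\bigl((p^{-1}\mathbf I_r+\sigma^{-2}\mathbf G^{\mathsf H}\mathbf G)^{-1}\bigr)=\sum_k p\sigma^2/(\sigma^2+p\mu_k)$ holds exactly for any $\mathbf G$.
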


\begin{corollary}[High-SNR regime]
	\label{cor:high_snr}
	If $\mu_k > 0$ for all $k$, then as $\mathrm{SNR} \to \infty$,
	\begin{equation}
		\mathrm{NMSE}_{\mathrm{oracle\text{-}rr}}
		\to
		\frac{\sigma^2\,
			\operatorname{tr}\!\bigl(
			(\mathbf G^{\mathsf H}\mathbf G)^{-1}
			\bigr)}
		{\operatorname{tr}(\mathbf P)}.
		\label{eq:nmse_high_snr}
	\end{equation}
	If $\mu_k = \mu$ for all $k$, as for a tight frame, then
	$\mathrm{NMSE} \to 1 / (\mu \cdot \mathrm{SNR_m})$.
\end{corollary}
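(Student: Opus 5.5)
Starting point: the exact formula \eqref{eq:oracle_rr_nmse} of Proposition~\ref{prop:oracle_reduced_mmse}, written as
\[
\mathrm{NMSE}_{\mathrm{oracle\text{-}rr}}=\frac{1}{\operatorname{tr}(\mathbf P)}\sum_{k=1}^r \frac{p_k\sigma^2}{\sigma^2+p_k\mu_k}.
\]
One bookkeeping point first. In that formula, $\mu_k$ is paired with $p_k$ as if $\mathbf P$ and $\mathbf G^{\mathsf H}\mathbf G$ were simultaneously diagonal. For the limit I will avoid this and work with the matrix expression $\operatorname{tr}(\mathbf C_z)$ from the proof of Proposition~\ref{prop:oracle_reduced_mmse}, where
\[
\mathbf C_z=(\mathbf P^{-1}+\sigma^{-2}\mathbf G^{\mathsf H}\mathbf G)^{-1}.
\]
The hypothesis $\mu_k>0$ for all $k$ means $\mathbf G^{\mathsf H}\mathbf G\succ\mathbf 0$, so its inverse exists.

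\textbf{Main step: factor and expand.} Send $\mathrm{SNR}\to\infty$ by letting $\sigma^2\to 0$ with $\mathbf P$ fixed. Factor
\[
\mathbf C_z=\sigma^2\bigl(\mathbf G^{\mathsf H}\mathbf G+\sigma^2\mathbf P^{-1}\bigr)^{-1}.
\]
Since $\mathbf G^{\mathsf H}\mathbf G$ is invertible, matrix inversion is continuous at it (or use a Neumann series). This gives
\[
\bigl(\mathbf G^{\mathsf H}\mathbf G+\sigma^2\mathbf P^{-1}\bigr)^{-1}=(\mathbf G^{\mathsf H}\mathbf G)^{-1}+\mathcal O(\sigma^2),
\]
and hence
\[
\operatorname{tr}(\mathbf C_z)=\sigma^2\operatorname{tr}\bigl((\mathbf G^{\mathsf H}\mathbf G)^{-1}\bigr)+\mathcal O(\sigma^4).
\]
Dividing by $\operatorname{tr}(\mathbf P)$ gives \eqref{eq:nmse_high_snr}.

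\textbf{Meaning of ``$\to$''.} The right-hand side of \eqref{eq:nmse_high_snr} itself tends to zero. So the statement is to be read as first-order asymptotic equivalence: the ratio of the two sides tends to $1$. I will state it that way, noting that the $\mathcal O(\sigma^4)$ remainder is negligible relative to the $\sigma^2$ term exactly because $\operatorname{tr}\bigl((\mathbf G^{\mathsf H}\mathbf G)^{-1}\bigr)>0$ is finite. For diagonal $\mathbf P$, the same conclusion also follows termwise from \eqref{eq:oracle_rr_mmse}: each summand behaves like $\sigma^2/\mu_k$, and $\sum_k\mu_k^{-1}=\operatorname{tr}\bigl((\mathbf G^{\mathsf H}\mathbf G)^{-1}\bigr)$.

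\textbf{Tight-frame case.} If $\mu_k=\mu$ for all $k$, then $\mathbf G^{\mathsf H}\mathbf G=\mu\mathbf I_r$, so the numerator becomes $\sigma^2 r/\mu$. Combined with the flat spectrum of Corollary~\ref{cor:flat_spectrum} ($\operatorname{tr}(\mathbf P)=rp$), this gives
\[
\mathrm{NMSE}\to\frac{\sigma^2}{\mu p}=\frac{1}{\mu\cdot\mathrm{SNR_m}}.
\]
Alternatively, from \eqref{eq:nmse_flat_spectrum}, use $1/(1+x)\sim 1/x$ as $x\to\infty$. I will state that this second claim implicitly assumes the equal-power setting of Corollary~\ref{cor:flat_spectrum}.

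\textbf{Main obstacle.} The analysis is routine; the difficulty is interpretive. One must fix which limit is meant, namely $\sigma^2\to 0$ with $\mathbf P$ fixed so that $\mathrm{SNR}$ and $\mathrm{SNR_m}$ grow together, and read the convergence as asymptotic equivalence rather than convergence to a nonzero constant.
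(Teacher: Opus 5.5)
Your proof is correct, but it takes a different route from the one the paper implicitly uses. The paper gives no separate argument. The corollary is read off termwise from the scalar formula \eqref{eq:oracle_rr_mmse}: each summand $p_k\sigma^2/(\sigma^2+p_k\mu_k)$ behaves like $\sigma^2/\mu_k$, and $\sum_k\mu_k^{-1}=\operatorname{tr}\bigl((\mathbf G^{\mathsf H}\mathbf G)^{-1}\bigr)$ gives the result. The tight-frame case then follows from $1/(1+x)\sim 1/x$ in \eqref{eq:nmse_flat_spectrum}.

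That is a one-line derivation, but it rests on pairing $p_k$ with $\mu_k$. This pairing is exact when $\mathbf P$ and $\mathbf G^{\mathsf H}\mathbf G$ commute (e.g.\ flat $\mathbf P$), but not for a general diagonal $\mathbf P$. The stated endpoint is still right only because $\sum_k\mu_k^{-1}$ is insensitive to the pairing.

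Your factorization $\mathbf C_z=\sigma^2(\mathbf G^{\mathsf H}\mathbf G+\sigma^2\mathbf P^{-1})^{-1}$ with a first-order expansion avoids that formula entirely. It covers every positive definite $\mathbf P$, gives an explicit $\mathcal O(\sigma^4)$ remainder, and shows that the leading term is the prior-free least-squares error. Your reading of ``$\to$'' as a ratio tending to $1$ is the correct one, since both sides vanish.

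The only thing to add is the hypothesis $\mathbf P\succ\mathbf 0$. Your use of $\mathbf P^{-1}$ (like the paper's) tacitly requires it, even though the model allows $\mathbf P\succeq\mathbf 0$. If $p_k=0$ for some $k$, that coordinate is known to be zero. The MSE then behaves like $\sigma^2\operatorname{tr}\bigl((\mathbf G_{\mathcal S}^{\mathsf H}\mathbf G_{\mathcal S})^{-1}\bigr)$ with $\mathcal S=\{k:p_k>0\}$, which is strictly smaller than the claimed limit.
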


\begin{corollary}[Rank-deficient sensing, $M < r$]
	\label{cor:rank_deficient}
	When $M < r$, $\mathbf G \in \mathbb C^{M \times r}$ has at least
	$r - M$ zero eigenvalues. Therefore,
	\begin{equation}
		\mathrm{MSE}_{\mathrm{oracle\text{-}rr}}
		\ge \sum_{k:\,\mu_k = 0} p_k,
		\label{eq:mse_rank_deficient}
	\end{equation}
	which remains nonzero at arbitrarily high SNR. Thus, no estimator
	within the $r$-dimensional subspace can overcome the missing spatial
	dimensions.
\end{corollary}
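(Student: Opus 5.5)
The plan is to start from the closed form of Proposition~\ref{prop:oracle_reduced_mmse}. Its derivation does not need $\mathbf G^{\mathsf H}\mathbf G$ to be invertible, only the posterior covariance $\mathbf C_z$. I will first show that at least $r-M$ of the $\mu_k$ vanish. Then I will bound the trace of $\mathbf C_z$ from below using only the directions with $\mu_k=0$.

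\textbf{Step 1 (zero eigenvalues).} Since $\mathbf G\in\mathbb C^{M\times r}$, we have $\operatorname{rank}(\mathbf G^{\mathsf H}\mathbf G)=\operatorname{rank}(\mathbf G)\le M<r$. The positive semidefinite matrix $\mathbf G^{\mathsf H}\mathbf G$ therefore has a null space $\mathcal N$ of dimension at least $r-M$. So at least $r-M$ of the eigenvalues $\mu_k$ equal zero, which is the first claim.

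\textbf{Step 2 (lower bound).} The most direct route is to assume, as Proposition~\ref{prop:oracle_reduced_mmse} implicitly does, that $\mathbf P$ and $\mathbf G^{\mathsf H}\mathbf G$ are simultaneously diagonal, so that the modes are indexed consistently. Each term of \eqref{eq:oracle_rr_mmse} is nonnegative, and a term with $\mu_k=0$ equals exactly $p_k\sigma^2/\sigma^2=p_k$. Discarding the other terms gives \eqref{eq:mse_rank_deficient}. These surviving terms do not depend on $\sigma^2$, so the bound persists as $\mathrm{SNR}\to\infty$.

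For a non-commuting $\mathbf P$ I would make the argument robust as follows. Let $\mathbf U$ be an orthonormal basis of $\mathcal N$. For any $\mathbf u\in\mathcal N$, the observation $\mathbf y_t=\mathbf G\mathbf z_t+\mathbf n_t$ is unchanged by adding $c\,\mathbf u$ to $\mathbf z_t$, so the component $\mathbf U^{\mathsf H}\mathbf z_t$ is not informed by $\mathbf y_t$ in the null directions. Concretely, I would show that $\mathbf U^{\mathsf H}\mathbf C_z\mathbf U\succeq(\mathbf U^{\mathsf H}\mathbf P^{-1}\mathbf U)^{-1}$. This follows by applying the Schur-complement or Cauchy-interlacing inequality $\mathbf U^{\mathsf H}\mathbf X^{-1}\mathbf U\succeq(\mathbf U^{\mathsf H}\mathbf X\mathbf U)^{-1}$ with $\mathbf X=\mathbf P^{-1}+\sigma^{-2}\mathbf G^{\mathsf H}\mathbf G$, together with $\mathbf U^{\mathsf H}\mathbf G^{\mathsf H}\mathbf G\mathbf U=\mathbf 0$. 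Taking traces gives $\operatorname{tr}(\mathbf C_z)\ge\operatorname{tr}\bigl((\mathbf U^{\mathsf H}\mathbf P^{-1}\mathbf U)^{-1}\bigr)>0$, which is independent of $\sigma^2$. In the diagonal or aligned case this expression reduces to $\sum_{\mu_k=0}p_k$.

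\textbf{Step 3 (no estimator escapes).} Because the MMSE is the minimum Bayes risk over all measurable functions of $\mathbf y_t$, every estimator confined to $\operatorname{col}(\mathbf B)$ incurs at least this error. With $\mathbf B^{\mathsf H}\mathbf B=\mathbf I_r$, the full-port error equals the modal error, as in the proof of Proposition~\ref{prop:oracle_reduced_mmse}.

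The main obstacle is Step 2 in the general case: \eqref{eq:mse_rank_deficient} implicitly pairs $p_k$ with $\mu_k$, which is only meaningful when the modal and sensing eigenbases align. I would state that assumption explicitly and present the Schur-complement bound as the general version, noting that it remains strictly positive whenever $\mathbf P\succ\mathbf 0$.
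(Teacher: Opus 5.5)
Your proposal is correct, and your direct route is the paper's own argument. The paper gives no standalone proof of this corollary, but the proof of Proposition~\ref{prop:phase_transition}(i) uses exactly your two ingredients: $\operatorname{rank}(\mathbf G)\le M$, and keeping only the terms $p_k\sigma^2/(\sigma^2+0)=p_k$ of \eqref{eq:oracle_rr_mmse} with $\mu_k=0$.

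Your Schur-complement version is a genuinely different and more general argument, and the caveat behind it is substantive. In this paper every entry of $\mathbf G=\mathbf S_{\mathcal O}\mathbf B$ has modulus $N^{-1/2}$. So each diagonal entry of $\mathbf G^{\mathsf H}\mathbf G$ equals $M/N$, and no $\mathbf e_k$ lies in $\ker\mathbf G$. For $M<r$, $\mathbf G^{\mathsf H}\mathbf G$ is therefore never diagonal. The pairing of $p_k$ with $\mu_k$ is justified when $\mathbf P$ commutes with $\mathbf G^{\mathsf H}\mathbf G$ (e.g.\ flat $\mathbf P=p\mathbf I_r$, the ideal Clarke case), but not in general.

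Read literally, the bound can fail. Take a two-mode toy case with $M=1$ and $\mathbf G=N^{-1/2}[a,\ b]$, $|a|=|b|=1$. Then $\mu_1=2/N$ and $\mu_2=0$, yet the MMSE tends to $2p_1p_2/(p_1+p_2)$ as $\sigma^2\to0$. For $p_1=1$, $p_2=100$ this is about $1.98<p_2$.

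Your quantity $\operatorname{tr}\bigl((\mathbf U^{\mathsf H}\mathbf P^{-1}\mathbf U)^{-1}\bigr)$ reproduces this value exactly. It is the trace of the conditional covariance of $\mathbf U^{\mathsf H}\mathbf z_t$ given $\mathbf G\mathbf z_t$, so your bound is attained as $\sigma^2\to0$. Since $\mathbf U^{\mathsf H}\mathbf P^{-1}\mathbf U\preceq p_{\min}^{-1}\mathbf I$, it still gives $\ge(r-M)p_{\min}$, which is all Proposition~\ref{prop:phase_transition}(i) needs.

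Two minor points. First, the heuristic that $\mathbf U^{\mathsf H}\mathbf z_t$ is ``not informed'' by $\mathbf y_t$ is false for non-aligned $\mathbf P$: correlation under $\mathbf P$ leaks information. That is exactly why the bound is $(\mathbf U^{\mathsf H}\mathbf P^{-1}\mathbf U)^{-1}$ rather than $\mathbf U^{\mathsf H}\mathbf P\mathbf U$, though your formal inequality does not depend on the heuristic. Second, your final trace step uses $\operatorname{tr}(\mathbf C_z)\ge\operatorname{tr}(\mathbf U^{\mathsf H}\mathbf C_z\mathbf U)$ for $\mathbf C_z\succeq\mathbf 0$, which you should state explicitly.
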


\subsection{Phase Transition at \texorpdfstring{$M = r$}{M = r}}
\label{subsec:phase_transition}

The preceding corollaries imply a sharp transition at $M = r$.

\begin{proposition}[Phase transition]
	\label{prop:phase_transition}
	Consider a Clarke-type channel with $r = 2\lfloor W\rfloor + 1$
	active modes and $M$ selected ports.
	\begin{enumerate}
		\item[\emph{(i)}]
		\emph{Impossibility regime, $M < r$.}
		For any $\mathcal O$ with $|\mathcal O| = M < r$,
		\begin{equation}
			\mathrm{NMSE}_{\mathrm{oracle\text{-}rr}}
			\ge \frac{(r - M)\,p_{\min}}{r\,\bar p},
			\label{eq:impossibility_bound}
		\end{equation}
		where $p_{\min} = \min_k p_k$ and
		$\bar p = r^{-1}\sum_k p_k$. Under the flat Clarke model,
		$\mathrm{NMSE} \ge (r{-}M)/r$, independent of SNR.
		
		\item[\emph{(ii)}]
		\emph{Achievability regime, $M \ge r$.}
		There exist selections $\mathcal O$ with $|\mathcal O| = M \ge r$
		for which $\mathbf G$ has full column rank and
		\begin{equation}
			\mathrm{NMSE}_{\mathrm{oracle\text{-}rr}}
			\le \frac{r}{\mathrm{SNR_m} \cdot \mu_{\min}}
			\xrightarrow{\mathrm{SNR_m} \to \infty} 0,
			\label{eq:achievability_bound}
		\end{equation}
		where $\mu_{\min} > 0$ is the smallest eigenvalue of
		$\mathbf G^{\mathsf H}\mathbf G$.
	\end{enumerate}
\end{proposition}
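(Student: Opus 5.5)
Both parts follow from the closed form of Proposition~\ref{prop:oracle_reduced_mmse}. Write
\[
\mathrm{MSE}_{\mathrm{oracle\text{-}rr}}=\operatorname{tr}\bigl((\mathbf P^{-1}+\sigma^{-2}\mathbf G^{\mathsf H}\mathbf G)^{-1}\bigr)
\]
and bound it from below for $M<r$ and from above for $M\ge r$. One caveat is needed before relying on the sum over $k$. The expression \eqref{eq:oracle_rr_mmse} pairs $p_k$ with $\mu_k$ as if $\mathbf P$ and $\mathbf G^{\mathsf H}\mathbf G$ were simultaneously diagonal. To avoid depending on that, I will work directly with the matrix trace and use only monotonicity in the Loewner order.

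\textbf{Part (i).} Since $\mathbf G=\mathbf S_{\mathcal O}\mathbf B$ has $M$ rows, $\operatorname{rank}(\mathbf G^{\mathsf H}\mathbf G)\le M<r$. Hence its null space $\mathcal N$ has dimension at least $r-M$, as noted in Corollary~\ref{cor:rank_deficient}.

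Let $\mathbf Q$ have orthonormal columns spanning $\mathcal N$. The precision matrix is $\mathbf J=\mathbf P^{-1}+\sigma^{-2}\mathbf G^{\mathsf H}\mathbf G$, and on $\mathcal N$ it reduces to $\mathbf Q^{\mathsf H}\mathbf J\mathbf Q=\mathbf Q^{\mathsf H}\mathbf P^{-1}\mathbf Q$.

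For a positive definite $\mathbf J$, the compression inequality $\mathbf Q^{\mathsf H}\mathbf J^{-1}\mathbf Q\succeq(\mathbf Q^{\mathsf H}\mathbf J\mathbf Q)^{-1}$ holds; it follows from a Schur complement or Cauchy--Schwarz argument. This gives
\[
\operatorname{tr}(\mathbf J^{-1})\ge\operatorname{tr}(\mathbf Q^{\mathsf H}\mathbf J^{-1}\mathbf Q)\ge\operatorname{tr}\bigl((\mathbf Q^{\mathsf H}\mathbf P^{-1}\mathbf Q)^{-1}\bigr).
\]
The matrix $\mathbf Q^{\mathsf H}\mathbf P^{-1}\mathbf Q$ has eigenvalues at most $1/p_{\min}$. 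Therefore the right-hand side is at least $(r-M)\,p_{\min}$, independent of $\sigma^2$.

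Dividing by $\operatorname{tr}(\mathbf P)=r\bar p$ yields \eqref{eq:impossibility_bound}. In the flat case $p_{\min}=\bar p$, so the bound becomes $(r-M)/r$. If some $p_k=0$, I handle it by a limiting argument or by restricting to the support of $\mathbf P$.

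\textbf{Part (ii).} Two things are needed: existence of a good selection, and the error bound.

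\emph{Existence.} The columns of $\mathbf B$ are DFT columns at $r$ distinct frequencies. Any $r$ ports, and in particular $r$ consecutive ports, give an $r\times r$ submatrix $\mathbf S_{\mathcal O}\mathbf B$. This submatrix is a scaled Vandermonde matrix with distinct nodes $e^{j\omega_k}$, $k\in\mathcal K$, so it is nonsingular.

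Adding further rows cannot reduce the column rank, so any $\mathcal O\supseteq$ these $r$ ports with $|\mathcal O|=M$ works. In fact any $r$ distinct ports work, since a generalized Vandermonde matrix with distinct integer exponents on the circle may be singular but not for contiguous frequency sets. To be safe, I will just use consecutive ports. This gives $\mu_{\min}>0$.

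\emph{Error bound.} Drop $\mathbf P^{-1}\succeq 0$ in the Loewner order: $\mathbf J\succeq\sigma^{-2}\mathbf G^{\mathsf H}\mathbf G\succeq\sigma^{-2}\mu_{\min}\mathbf I$. Hence
\[
\operatorname{tr}(\mathbf J^{-1})\le r\sigma^2/\mu_{\min}.
\]
Normalizing by $\operatorname{tr}(\mathbf P)$ and writing $\mathrm{SNR_m}=\bar p/\sigma^2$ (in the flat case $p/\sigma^2$) gives $\mathrm{NMSE}\le r/(\mathrm{SNR_m}\,r\,\mu_{\min})\le r/(\mathrm{SNR_m}\,\mu_{\min})$. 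This is consistent with \eqref{eq:achievability_bound}, even with a spare factor of $r$, and it tends to $0$.

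\textbf{Main obstacle.} The delicate step is the lower bound in (i) when $\mathbf P$ is not aligned with the null space of $\mathbf G$. The compression inequality handles this cleanly, and I would verify it via the block-inverse formula. A secondary point is fixing the definition of $\mathrm{SNR_m}$ for non-flat $\mathbf P$.
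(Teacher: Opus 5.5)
Your proof is correct and follows the paper's outline. In (i), the null space of $\mathbf G$ has dimension at least $r-M$ and carries error that no SNR can remove; in (ii), full column rank together with $\mu_{\min}$ controls the error. The difference lies in the tools used at each step.

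For (i), the paper reads off $\sum_{k:\mu_k=0}p_k\ge(r-M)p_{\min}$ from the per-mode formula \eqref{eq:oracle_rr_mmse}. That formula pairs $p_k$ with $\mu_k$, so it implicitly assumes that $\mathbf P$ and $\mathbf G^{\mathsf H}\mathbf G$ share an eigenbasis, which is automatic only in the flat case $\mathbf P=p\mathbf I$. Your compression inequality $\mathbf Q^{\mathsf H}\mathbf J^{-1}\mathbf Q\succeq(\mathbf Q^{\mathsf H}\mathbf J\mathbf Q)^{-1}$ removes that assumption and works for any $\mathbf P\succ\mathbf 0$. Moreover, $\operatorname{tr}\bigl((\mathbf Q^{\mathsf H}\mathbf P^{-1}\mathbf Q)^{-1}\bigr)$ is exactly the $\sigma^2\to0$ limit of the oracle MSE. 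It is therefore the correct matrix form of the paper's ``irreducible energy of the missing modes.''

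For (ii), the paper asserts full rank ``generically'' and combines Corollary~\ref{cor:high_snr} with $\sum_k\mu_k^{-1}\le r/\mu_{\min}$. Your Loewner bound $\mathbf J\succeq\sigma^{-2}\mu_{\min}\mathbf I$ gives the inequality at every SNR, again without the per-mode pairing. Your Vandermonde argument is an explicit certificate rather than a genericity claim. The paper's route is shorter and gives the intuitive per-mode picture; yours is what actually holds when $\mathbf P$ is not flat.

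Three small remarks. First, if $p_{\min}=0$ the right-hand side of \eqref{eq:impossibility_bound} vanishes, so no limiting argument is needed. Second, your hedged aside is in fact right. Since $\mathcal K(\vartheta)$ in \eqref{eq:active_dft_support} is a cyclically contiguous block of DFT indices, any $r$ distinct rows of $\mathbf B$ form a row-scaled Vandermonde matrix in the distinct nodes $e^{j2\pi(n-1)/N}$. Hence every selection with $M\ge r$ gives full column rank, not merely some selection, although $\mu_{\min}$ can be tiny for clustered ports. Third, the definition of $\mathrm{SNR_m}$ for non-flat $\mathbf P$ is not a real issue. Your intermediate bound $r\sigma^2/(\mu_{\min}\operatorname{tr}(\mathbf P))$ is at most $r/(\mathrm{SNR_m}\,\mu_{\min})$ whether $\mathrm{SNR_m}$ uses $p_{\min}$, $\bar p$, or $p_{\max}$, because each of these is at most $\operatorname{tr}(\mathbf P)$. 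The extra factor $r$ in the statement is exactly what absorbs the $p_{\max}$ choice.
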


\begin{IEEEproof}
	For Part~(i), the null space of $\mathbf G$ has dimension at least
	$r-M$, giving irreducible energy
	$\sum_{k:\mu_k=0} p_k \ge (r{-}M)\,p_{\min}$. Normalizing by
	$\operatorname{tr}(\mathbf P) = r\bar p$ gives
	\eqref{eq:impossibility_bound}. For Part~(ii), selecting $M \ge r$
	rows from the DFT sub-matrix
	$\mathbf B(\vartheta) \in \mathbb C^{N \times r}$ generically
	produces a full-column-rank matrix. Corollary~\ref{cor:high_snr} then
	applies, and $\sum_k \mu_k^{-1} \le r / \mu_{\min}$ yields
	\eqref{eq:achievability_bound}.
\end{IEEEproof}

Uniform selection provides a useful explicit characterization.

\begin{proposition}[NMSE under uniform port selection]
	\label{prop:uniform_selection}
	Suppose the $M \ge r$ observed ports are uniformly spaced over the
	$N$-port aperture, with
	$n_m = \lfloor (m{-}1)N/M \rfloor + 1$ for $m = 1,\dots,M$. In the
	large-$N$ regime, all eigenvalues of
	$\mathbf G^{\mathsf H}\mathbf G$ converge to $M/N$. Under the flat
	Clarke model,
	\begin{equation}
		\mathrm{NMSE}_{\mathrm{oracle\text{-}rr}}^{\mathrm{unif}}
		\approx \frac{1}{1 + \rho \cdot \mathrm{SNR_m}},
		\label{eq:nmse_uniform}
	\end{equation}
	where $\rho = M/N$ is the observation ratio.
\end{proposition}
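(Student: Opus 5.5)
The plan is to compute $\mathbf G^{\mathsf H}\mathbf G$ explicitly and show that, for uniformly spaced observed ports, it is approximately a scaled identity. The result then follows directly from Corollary~\ref{cor:flat_spectrum}.

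\textbf{Step 1: entries of the Gram matrix.} Let $\mathbf G=\mathbf S_{\mathcal O}\mathbf B(\vartheta)$ with $\mathbf B$ the unitary-DFT columns indexed by $\mathcal K(\vartheta)$. Each entry of $\mathbf B$ has modulus $N^{-1/2}$, so for modes $k,k'\in\mathcal K(\vartheta)$,
\begin{equation*}
	[\mathbf G^{\mathsf H}\mathbf G]_{k,k'}
	= \frac{1}{N}\sum_{m=1}^{M} e^{j2\pi (k'-k)(n_m-1)/N}.
\end{equation*}
The diagonal entries are therefore exactly $M/N$ for any selection. The whole proof thus reduces to controlling the off-diagonal terms.

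\textbf{Step 2: off-diagonal terms.} Write $n_m-1=\lfloor (m-1)N/M\rfloor=(m-1)N/M-\delta_m$ with $\delta_m\in[0,1)$. In the exactly divisible case $M\mid N$ we have $\delta_m=0$, and the sum is geometric:
\begin{equation*}
	\sum_{m=0}^{M-1} e^{j2\pi (k'-k)m/M}=0
	\quad\text{whenever } (k'-k)\not\equiv 0 \pmod M.
\end{equation*}
Since $\mathcal K(\vartheta)$ is a contiguous band of $r$ frequencies (contiguous modulo $N$, with the negative frequencies wrapping around), all differences satisfy $|k'-k|\le r-1<M$. This is exactly where $M\ge r$ is used. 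The band structure also means the differences are unaffected by the wraparound. Hence $\mathbf G^{\mathsf H}\mathbf G=(M/N)\mathbf I_r$ exactly.

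In the general case, each phase differs from the ideal one by $2\pi(k'-k)\delta_m/N$, whose magnitude is at most $2\pi r/N$. Using $|e^{jx}-1|\le|x|$, each off-diagonal entry has modulus at most $(M/N)\cdot 2\pi r/N$. By Gershgorin, every eigenvalue then satisfies
\begin{equation*}
	\mu_k=\frac{M}{N}\Bigl(1+\mathcal O\bigl(r^2/N\bigr)\Bigr).
\end{equation*}
Because $r\approx 2\vartheta+1$ is fixed as $N\to\infty$, all $\mu_k$ converge to $\rho=M/N$ in relative terms.

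\textbf{Step 3: conclusion.} Substituting $\mu_k\approx\rho$ into Corollary~\ref{cor:flat_spectrum} gives
\begin{equation*}
	\mathrm{NMSE}_{\mathrm{oracle\text{-}rr}}^{\mathrm{unif}}\approx\frac{1}{1+\rho\,\mathrm{SNR_m}}.
\end{equation*}
The approximation error is controlled by the continuity of $x\mapsto 1/(1+\mathrm{SNR_m}\,x)$. Since $\epsilon_{\mathrm{sub}}\approx 0$ is assumed via Proposition~\ref{prop:truncation_error}, the oracle reduced-rank NMSE is the relevant quantity.

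\textbf{Main obstacle.} The delicate step is the non-divisible case. The Gershgorin perturbation bound requires $r^2/N\to 0$, which holds in the stated large-$N$ regime but not uniformly. The geometric-sum cancellation also needs $k'-k\not\equiv 0 \pmod M$, and this hinges on $M\ge r$ together with the contiguity of $\mathcal K(\vartheta)$. I would check the contiguity first, treating the negative frequencies through the wraparound. I would then handle rounding by the perturbation argument above, and only afterwards consider a sharper Dirichlet-kernel estimate if needed.
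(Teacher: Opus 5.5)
Your proposal is correct and follows the same route as the paper's proof. The diagonal entries of $\mathbf G^{\mathsf H}\mathbf G$ equal $M/N$ exactly, the cross terms cancel by the geometric sum because band differences cannot alias modulo $M$ when $M\ge r$, and $\mathbf G^{\mathsf H}\mathbf G\approx(M/N)\mathbf I_r$ is then substituted into Corollary~\ref{cor:flat_spectrum}. Your Gershgorin estimate with relative error $\mathcal O(r^2/N)$ for the non-divisible case is a useful quantitative version of what the paper only calls ``approximately''.

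One thing to make explicit in that perturbation step: replace $k'-k$ by its representative $d\equiv k'-k \pmod N$ with $|d|\le r-1$. This is legitimate because $n_m-1$ is an integer, and it is needed because for a wrapped band the raw index difference can be close to $N$. Without it, the phase-error bound $2\pi r/N$ would fail.
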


\begin{IEEEproof}
	Uniform sub-sampling selects approximately every $(N/M)$th DFT row.
	Each active column $\mathbf f_k$ has selected energy
	$\|\mathbf S_{\mathcal O}\mathbf f_k\|^2 = M/N$. For $r < M$, the
	cross terms vanish without aliasing, giving
	$\mathbf G^{\mathsf H}\mathbf G \approx (M/N)\,\mathbf I_r$.
	Substitution into Corollary~\ref{cor:flat_spectrum} yields
	\eqref{eq:nmse_uniform}.
\end{IEEEproof}
\begin{figure}[t!]
	\centering
	\includegraphics[width=0.85\columnwidth]{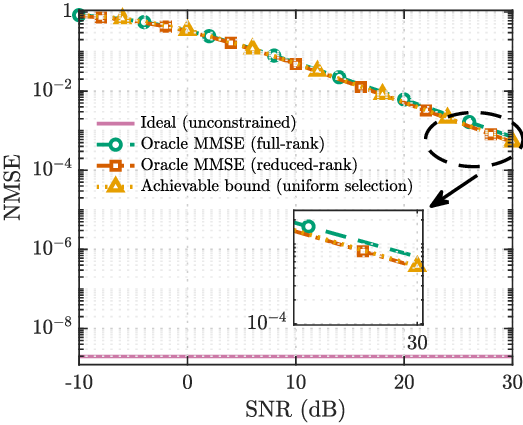}
	\caption{NMSE versus SNR with $M=10$ uniformly selected ports.
		System parameters, $N=64$, $W=2$, and $r=5$.}
	\label{fig:NMSE_SNR}
	\vspace{-2mm}
\end{figure}
\subsection{Learning Error and Snapshot Requirements}
\label{subsec:learning_error}

The learning term quantifies the cost of estimating, rather than
knowing, the hyperparameters.

\begin{proposition}[Snapshot requirements]
	\label{prop:min_T}
	Consider EM estimation of the $r$ modal powers $\{p_k\}$ and noise
	variance $\sigma^2$ from $T$ independent snapshots, each with $M$
	noisy observations.
	\begin{enumerate}
		\item[\emph{(i)}]
		\emph{Identifiability.}
		A necessary condition for identifying the $r + 1$ parameters from
		marginal statistics is
		\begin{equation}
			T \ge \bigl\lceil (r + 1) / M \bigr\rceil.
			\label{eq:T_necessary}
		\end{equation}
		Thus, one snapshot is sufficient when $M \ge r + 1$.
		
		\item[\emph{(ii)}]
		\emph{Estimation accuracy.}
		Under regularity conditions, the MLE of each $p_k$ is consistent,
		with
		\begin{equation}
			\operatorname{Var}(\widehat p_k)
			\approx \frac{p_k^2}{T} \cdot c_k(M, \sigma^2),
			\label{eq:pk_variance}
		\end{equation}
		where $c_k \ge 1$ depends on sensing geometry and SNR, and equals
		$1$ for directly observed modal coefficients. The corresponding
		normalized learning excess is
		\begin{equation}
			\frac{\epsilon_{\mathrm{learn}}}
			{\operatorname{tr}(\mathbf P)}
			\approx
			\frac{1}{T}
			\sum_{k=1}^{r}
			\frac{p_k\,c_k}{\operatorname{tr}(\mathbf P)}
			\cdot
			\frac{\mu_k^2\,\mathrm{SNR_m}^2}
			{(1 + p_k\mu_k/\sigma^2)^4}.
			\label{eq:learning_excess}
		\end{equation}
		
		Under the flat Clarke model with uniform selection, $p_k = p$,
		$\mu_k = M/N$, and $c_k = 1$. A sufficient condition for
		$\epsilon_{\mathrm{learn}} /
		\operatorname{tr}(\mathbf P) \le \varepsilon_{\mathrm{learn}}$ is
		\begin{equation}
			T \ge
			\frac{1}{\varepsilon_{\mathrm{learn}}}
			\cdot
			\frac{\rho^2\,\mathrm{SNR_m}^2}
			{(1 + \rho\,\mathrm{SNR_m})^4}.
			\label{eq:T_sufficient_flat}
		\end{equation}
	\end{enumerate}
\end{proposition}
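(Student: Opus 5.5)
Part~(i) is a counting argument on the marginal law of the observations. Each snapshot satisfies $\mathbf y_t \sim \mathcal{CN}(\mathbf 0,\mathbf G\mathbf P\mathbf G^{\mathsf H}+\sigma^2\mathbf I_M)$, so the only data are the $T$ independent $M$-dimensional vectors $\{\mathbf y_t\}$. To identify the $r+1$ unknowns $(p_1,\dots,p_r,\sigma^2)$, the map from parameters to the data distribution must be injective. A local version of this requires the Fisher information to be nonsingular. The Fisher information of $T$ snapshots is $T$ times that of one snapshot, and a nonsingular $(r+1)\times(r+1)$ matrix needs $r+1$ independent equations. I will take the heuristic the authors seem to use, namely that each scalar observation contributes at most one independent equation about the parameters. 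Then there are $TM$ scalar equations in total. Requiring $TM\ge r+1$ gives \eqref{eq:T_necessary}. In particular, $T=1$ suffices once $M\ge r+1$; this is consistent with Proposition~\ref{prop:phase_transition}(ii) needing $M\ge r$ for $\mathbf G$ to be full rank, plus one extra dimension to separate $\sigma^2$ from the modal powers.

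Part~(ii) follows the standard MLE-plus-delta-method route. Under regularity conditions the MLE is consistent and asymptotically normal with covariance $\mathbf J^{-1}/T$, where $\mathbf J$ is the single-snapshot Fisher information. For the Gaussian model this is $[\mathbf J]_{ij}=\operatorname{tr}(\mathbf R^{-1}\partial_i\mathbf R\,\mathbf R^{-1}\partial_j\mathbf R)$ with $\mathbf R=\mathbf G\mathbf P\mathbf G^{\mathsf H}+\sigma^2\mathbf I$.
\begin{itemize}
\item \emph{Benchmark case.} If $z_k$ is observed directly, then $|z_k|^2$ is exponential with mean $p_k$. Its sample mean has variance exactly $p_k^2/T$, which sets $c_k=1$.
\item \emph{General case.} Noise and cross-mode coupling reduce the Fisher information, so $[\mathbf J^{-1}]_{kk}\ge p_k^2$. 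Defining $c_k=[\mathbf J^{-1}]_{kk}/p_k^2\ge1$ gives \eqref{eq:pk_variance}.
\end{itemize}

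The excess MSE comes from a second-order expansion of the mismatched Bayes risk around the oracle. I will first diagonalize, assuming $\mathbf G^{\mathsf H}\mathbf G$ is diagonal in the modal basis; this is exact in the uniform-selection setting of Proposition~\ref{prop:uniform_selection}. The problem then decouples into scalar channels $y_k=\sqrt{\mu_k}z_k+n_k$, each using a Wiener gain $w(\hat p)=\hat p\sqrt{\mu_k}/(\hat p\mu_k+\sigma^2)$.
\begin{itemize}
\item The MSE of the mismatched gain is quadratic in $w$. Its excess over the optimum is $(w(\hat p)-w(p))^2(p\mu_k+\sigma^2)$, so the first-order term vanishes by optimality of the oracle.
\item Taking the derivative, $w'(p)=\sqrt{\mu_k}\sigma^2/(p\mu_k+\sigma^2)^2$.
\item Combining these, the excess is $\mathrm{Var}(\hat p_k)\,\mu_k\sigma^4/(p_k\mu_k+\sigma^2)^3$.
\end{itemize}
Substituting $\mathrm{Var}(\hat p_k)=p_k^2c_k/T$ and rewriting in terms of $\mathrm{SNR_m}=p_k/\sigma^2$ should reproduce the rational form in \eqref{eq:learning_excess}.

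The main obstacle is matching the exact exponents in \eqref{eq:learning_excess}: my derivation gives a cube in the denominator, while the stated formula has a fourth power. I will first check whether \eqref{eq:learning_excess} is meant as an upper bound, using $(1+x)^{-3}\ge(1+x)^{-4}$ and adjusting constants. If it is not, I will absorb the discrepancy into $c_k$ and state the result as an order-level approximation, which is how the paper phrases it with ``$\approx$''. The flat-case condition \eqref{eq:T_sufficient_flat} then follows by setting $p_k=p$, $\mu_k=\rho$, and $c_k=1$, so that the sum over $k$ normalized by $\operatorname{tr}(\mathbf P)=rp$ averages to a single term, and solving for $T$.
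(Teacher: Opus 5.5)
The gap is in the last step of Part~(ii). You never reach \eqref{eq:learning_excess}, and neither of your fallbacks gets you there. Your scalar mismatch computation is correct, and so is $c_k\ge1$: per snapshot, $[\mathbf J_1]_{kk}=(\mathbf g_k^{\mathsf H}\mathbf R^{-1}\mathbf g_k)^2<p_k^{-2}$, because $\mathbf R\succeq p_k\mathbf g_k\mathbf g_k^{\mathsf H}+\sigma^2\mathbf I$ with $\mathbf g_k=\mathbf G\mathbf e_k$. With $x_k=\mu_k\mathrm{SNR_m}$, your per-mode excess is $(c_k/T)\,p_k\,x_k/(1+x_k)^3$, whereas \eqref{eq:learning_excess} has $(c_k/T)\,p_k\,x_k^2/(1+x_k)^4$. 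The numerators differ as well as the exponents, and the ratio of yours to the stated one is $(1+x_k)/x_k>1$.

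Your first fallback fails on direction. The sufficient condition \eqref{eq:T_sufficient_flat} needs an \emph{upper} bound on $\epsilon_{\mathrm{learn}}$, but the stated expression is smaller than yours, so it is at best a lower bound. Your inequality $(1+x)^{-3}\ge(1+x)^{-4}$ points the wrong way. Since $(1+x)/x\to\infty$ as $x\to0$, no constant rescaling repairs it.

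Your second fallback is not available either. Equation \eqref{eq:pk_variance} fixes $c_k$ as the normalized variance of $\widehat p_k$, and the flat case sets $c_k=1$. Absorbing the factor into $c_k$ would therefore give $T\ge\rho\,\mathrm{SNR_m}/\bigl(\varepsilon_{\mathrm{learn}}(1+\rho\,\mathrm{SNR_m})^3\bigr)$ instead of \eqref{eq:T_sufficient_flat}.

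What is missing is the approximation implicit in the paper's one-line ``propagating the parameter uncertainty through the MMSE.'' Split the error as $\widehat z_k(\widehat p_k)-\widehat z_k(p_k)=\Delta g\,z_k+\Delta w\,n_k$, with $g(p)=\sqrt{\mu_k}\,w(p)=p\mu_k/(p\mu_k+\sigma^2)$. Keeping only the signal part, $g'(p_k)^2\operatorname{Var}(\widehat p_k)\,p_k$, gives exactly $(c_k/T)\,p_k\,x_k^2/(1+x_k)^4$. You also kept, correctly, the noise part $(\Delta w)^2\sigma^2=(\Delta g)^2\sigma^2/\mu_k$, and that term is the source of the factor $(1+x_k)/x_k$.

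Discarding the noise part is justified only when $x_k\gg1$. That is also the only regime in which $c_k=1$ holds in the flat case: there $[\mathbf J_1]_{kk}=\mu_k^2/(p_k\mu_k+\sigma^2)^2$, so $c_k\ge(1+1/x_k)^2$. To prove the statement as written, you must impose this high-SNR, signal-only approximation explicitly. Without it, your derivation gives a strictly larger learning excess, and under that excess \eqref{eq:T_sufficient_flat} is not sufficient at moderate SNR. At $\rho\,\mathrm{SNR_m}=1$ it undershoots by a factor of two even with $c_k=1$.

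Part~(i) is the same counting heuristic as the paper's ($T\min(M,L_t)\ge r+1$). Do not cite the Fisher information in its support, though. $\mathbf J_T=T\mathbf J_1$ has the same rank for every $T$, so nonsingularity cannot produce a threshold in $T$, and the bound rests on the scalar count alone.
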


\begin{IEEEproof}
	For Part~(i), each snapshot provides one realization of
	$\mathbf y_t \sim \mathcal{CN}(\mathbf 0, \mathbf R_t)$, whose
	second-order statistics depend on the $r + 1$ unknown parameters.
	The sample covariance spans the parameter space when
	$T \cdot \min(M, L_t) \ge r + 1$. For Part~(ii), the Fisher
	information for $p_k$ is
	\begin{equation}
		[\mathbf J]_{kk}
		= T\operatorname{tr}\!\Bigl(
		\mathbf R_t^{-1}
		\frac{\partial \mathbf R_t}{\partial p_k}
		\mathbf R_t^{-1}
		\frac{\partial \mathbf R_t}{\partial p_k}
		\Bigr),
		\label{eq:fim_pk}
	\end{equation}
	where
	$\partial \mathbf R_t / \partial p_k
	= \mathbf G\mathbf e_k\mathbf e_k^{\mathsf H}\mathbf G^{\mathsf H}$.
	Evaluating this expression and propagating the parameter uncertainty
	through the MMSE gives \eqref{eq:learning_excess}. Solving the
	resulting inequality yields \eqref{eq:T_sufficient_flat}.
\end{IEEEproof}

Equation~\eqref{eq:T_sufficient_flat} decreases with SNR. At high SNR,
each snapshot reveals the modal coefficients accurately and few samples
are needed to learn their second-order statistics. At low SNR, more
snapshots are required. For $T = 1$, learning is most challenging,
although $M \gg r$ can still provide substantial statistical
information.
\begin{figure}[t!]
	\centering
	\includegraphics[width=0.85\columnwidth]{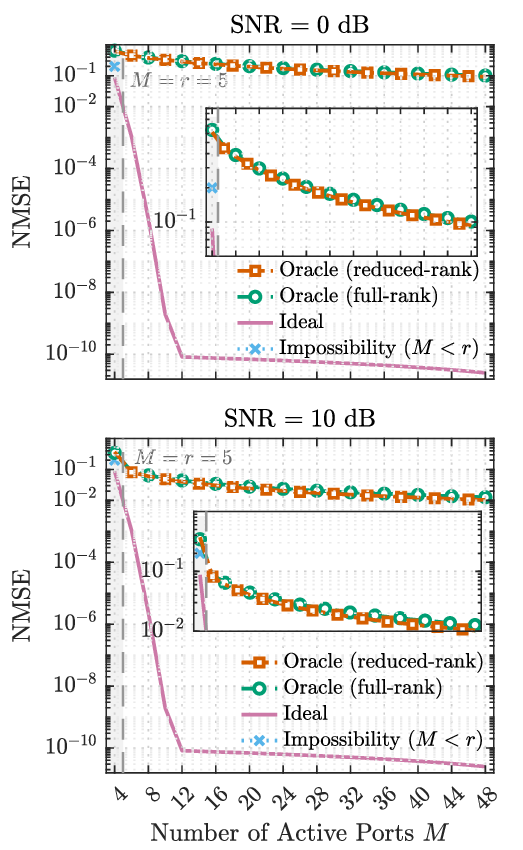}
	\caption{NMSE versus the number of observed ports $M$ at
		$\mathrm{SNR}=0$ dB and $10$ dB. The gray region denotes $M<r$,
		and the dashed line marks $M=r=5$. System parameters, $N=64$,
		$W=2$, and uniform port selection.}
	\label{fig:NMSE_M}
	\vspace{-2mm}
\end{figure}
\section{Numerical Results}\label{sec:numerical_results}

We consider a Clarke correlated FAS with $N=64$ candidate ports and
normalized aperture $W=2$, which gives the effective modal dimension
$r=2\lfloor W\rfloor+1=5$. The candidate ports are uniformly spaced,
and the observed ports are selected uniformly over the aperture. Since
$[\boldsymbol{\Sigma}]_{n,n}=1$, the per-port SNR is
$\mathrm{SNR}=1/\sigma^2$. The ideal bound denotes the unconstrained
rank-$M$ lower bound recently developed in \cite{Ideal_bound}, which represents the best performance achievable with full knowledge of the spatial covariance: $
	\mathrm{NMSE}_{\mathrm{ideal}}(M)
	= \frac{\sum_{k=M+1}^{N} \lambda_k}
	{\sum_{k=1}^{N} \lambda_k}.$
The full-rank oracle uses
the true covariance and noise variance. The reduced-rank oracle uses the
true modal basis, modal powers, and noise variance, and reports only
$\epsilon_{\mathrm{est}}/\operatorname{tr}(\boldsymbol{\Sigma}_{\mathrm{true}})$.
Therefore, it excludes the subspace truncation term
$\epsilon_{\mathrm{sub}}$.

Fig.~\ref{fig:NMSE_SNR} evaluates the NMSE for
$\mathrm{SNR}\in[-10,30]$ dB with $M=10$ active ports. Since $M>r$, the
modal sensing matrix has full column rank, and all port selection based
curves decrease continuously with SNR. In the high SNR regime, the
oracle curves exhibit an approximately inverse SNR decay, which agrees
with Corollary~\ref{cor:high_snr}. The achievable uniform selection
bound closely follows the reduced-rank oracle over the complete SNR
range. This confirms that uniform port selection approximately satisfies
$\mathbf G^{\mathsf H}\mathbf G\approx(M/N)\mathbf I_r$, as predicted by
Proposition~\ref{prop:uniform_selection}. The reduced rank curve can lie
slightly below the full-rank oracle because it contains only the
in-subspace estimation error and excludes $\epsilon_{\mathrm{sub}}$.

Fig.~\ref{fig:NMSE_M} evaluates the NMSE for
$M\in\{4,6,\ldots,48\}$ at $\mathrm{SNR}=0$ dB and $10$ dB. The gray
region represents $M<r$, and the vertical dashed line marks the critical
threshold $M=r=5$. When $M<r$, the effective sensing matrix is rank
deficient, and the missing modal dimensions produce a nonzero
irreducible error. Increasing SNR alone cannot remove this error, in
agreement with Corollary~\ref{cor:rank_deficient}. Once $M$ exceeds $r$,
both oracle curves decrease monotonically as more ports are observed,
and the improvement is more evident at the higher SNR. These results
verify the phase transition in Proposition~\ref{prop:phase_transition}.
They also confirm that the recovery performance above the threshold is
controlled by both the observation ratio $M/N$ and SNR. The rapid decay
of the ideal bound near $M=r$ further supports the effective modal
dimension $r\approx2W+1$.

\section{Conclusion}\label{sec:conclusion}
This paper investigated full-port CSI reconstruction for FASs without prior knowledge of the spatial covariance under the Clarke model. We showed that the channel admits a low-dimensional modal representation whose effective dimension is governed primarily by the physical aperture, revealing a sharp recoverability threshold at $M=r$. When $M<r$, reconstruction is fundamentally underdetermined, whereas for $M\geq r$, the NMSE decreases with increasing SNR and number of observed ports, even without spatial covariance information. These results characterize the fundamental role of the modal dimension in prior-free CSI reconstruction and provide useful guidance for determining the required number of observed ports. 

Future work will investigate whether the identified modal threshold remains valid in more realistic propagation environments, particularly under non-isotropic and nonstationary scattering, where the modal structure and effective channel dimension may deviate from those predicted by the idealized Clarke model.

\balance
\bibliographystyle{IEEEtran}
\bibliography{references}
\end{document}